%% file: arxiv.tex
\documentclass[letterpaper, 11pt]{article}
\input{settings/packages}

\input{settings/macros}

\title{The Economic Security of Exponential EIP-1559}

\author{%
\begin{tabular}{ccc}
Ben Berger & Edward W. Felten & Robin Fritsch \\[0.2em]
\texttt{bberger@offchain.io} &
\texttt{ed@offchain.io} &
\texttt{rfritsch@offchain.io} \\[1em]
\multicolumn{3}{c}{Offchain Labs}
\end{tabular}%
}

\date{}

\newtheorem{lemma}{Lemma}
\newtheorem{corollary}{Corollary}
\newtheorem{observation}{Observation}
\newtheorem{claim}{Claim}
\newtheorem{proposition}{Proposition}

\theoremstyle{definition} 
\newtheorem{example}{Example}
\newtheorem{definition}{Definition}
\declaretheorem[name=Theorem, numberwithin=section]{theorem}

\begin{document}

\maketitle

\begin{abstract}

We consider the problem of parameter selection for EIP-1559, a widely used gas pricing mechanism for blockchains.
As opposed to previous work, we aim to achieve \emph{economic security}, where the chosen parameters guarantee a lower bound on total collected gas fees whenever usage over a given time window exceeds a specified threshold. This bound can be set prohibitively high, thereby economically deterring usage above a desired level.
Such guarantees are particularly relevant to long-term objectives such as limiting state growth.

We apply our approach to the pure exponential version of EIP-1559 which Ethereum uses and to a variant used by  Robinhood Chain and Arbitrum.

To achieve economic security, we first characterize the revenue-minimizing gas-usage distribution for the exponential version and for the other variant we construct a distribution whose associated fee revenue provably approximates the minimum.

Using these results, we then demonstrate how to set economically secure mechanism parameters for both variants and we discuss the associated tradeoffs.

\end{abstract}

\section{Introduction}
\label{sec:introduction}

EIP-1559 \cite{eip1559} refers to the protocol upgrade that introduced Ethereum's current gas fee mechanism.\footnote{Gas is an abstract unit that measures blockchain compute resources in EVM-compatible chains.}
Since its activation in the London hard fork in August 2021, EIP-1559 has been extensively studied by the academic blockchain community. Variants of its fee mechanism have been adopted across numerous blockchain systems, particularly EVM-compatible chains and Layer 2 networks, including Robinhood Chain, Arbitrum, Base, Optimism, Monad, and Avalanche.

Although deployed variants differ in various ways, they all share several key characteristics.
First, the \emph{basefee} --- the price of a single gas unit --- is updated from one block to the next as a deterministic function of the preceding block's gas usage.
Each mechanism specifies a \emph{gas target} (or more than one target in the case of Robinhood Chain and Arbitrum), typically chosen to reflect the chain's sustainable throughput.
The basefee for the next block increases when the current block's gas usage exceeds the target and decreases when usage falls below it.
This feedback mechanism is intended to drive gas usage toward the target, with the basefee converging toward a market-clearing price for gas. 

A second shared characteristic is that the basefee responds approximately exponentially to deviations of gas usage from the target, subject to a minimum-basefee floor.
More precisely, the multiplicative change in the basefee is approximately exponential in the deviation of gas usage from the target.
The deployed variants in practice differ slightly in how they approximate this exponential response.\footnote{For example, Ethereum's basefee update rule multiplies the current price by the linear approximation to $\exp(d/(8T))$, where $d$ is the deviation from the target $T$ in the current block.
On the other hand, Ethereum's blob gas basefee update rule approximates the exponential by a Taylor representation \cite{eip4844}.}
% On the other hand, Ethereum's blob gas basefee update rule approximates a Taylor representation, adding extra rounded-down terms as long as the latest term evaluates to at least 1 \cite{eip4844}.}

A third key feature is that the target lies well below the protocol-imposed block gas limit.
This gap allows gas usage to exceed the sustainable throughput for some time, before the resulting increase in the basefee pushes the gas usage back towards the target.

This raises the question of how to set the mechanism's \emph{adjustment rate}, namely the magnitude of the relative price response to a given deviation from the target.
Intuitively, the greater the cost or risk associated with such deviations, the more aggressively the basefee should respond.
At the same time, users generally benefit from smoother and more predictable price dynamics.\footnote{\label{fn:wallets_max_price_increase}
For example, wallets rely on bounds on the rate at which the basefee can increase when estimating transaction fees. If the basefee can rise too quickly, a fee quote presented when a transaction is submitted may no longer be sufficient by the time the transaction is included on-chain.}
Choosing the adjustment rate therefore entails a trade-off between rapidly correcting deviations from the target and maintaining stable, predictable prices.

EIP-1559-style pricing mechanisms can enforce throughput constraints arising from different system objectives and operating over different time scales.

One important short-horizon constraint is the processing capacity of the chain's infrastructure such as its nodes.
When gas usage exceeds sustainable throughput, nodes with weaker hardware or bandwidth fall behind the tip of the chain, potentially by many seconds. Users relying on such nodes then experience longer transaction confirmation times and act on a stale view of the chain state, increasing the likelihood of transaction failures. Both effects severely degrade the user experience on the chain.
This motivates relatively aggressive price adjustments when gas usage approaches or exceeds the maximum sustainable node throughput, even over short periods, so that such episodes are corrected quickly.

Longer-horizon throughput constraints can impose different requirements on an EIP-1559-style pricing mechanism.
A notable example is controlling state growth, where the goal is to limit the growth of the disk footprint of the chain state over the course of longer time periods, such as a month or a year \cite{ethfoundation2025futurestate}. Here, deviating from the implied per-block target over the course of a few blocks is largely inconsequential.
Accordingly, such deviations can be met with significantly more gradual price adjustments between consecutive blocks.
In particular, the mechanism designer has greater flexibility when choosing the adjustment rate for a long-term constraint, making its calibration more subtle than in the short-term setting.

Moreover, the mechanism designer can adjust not only the adjustment rate but also the target itself.
When the objective is to bound average gas usage over a given time window, setting the target below the desired long-run usage bound creates additional slack, allowing the constraint to be satisfied with a less aggressive adjustment rate.

Although EIP-1559 variants that account for both short- and long-term considerations have been deployed or are pending deployment --- for example, Arbitrum uses a gas target to control state growth \cite{offchain2025aipgas}, while Ethereum plans to meter ``state-growth'' gas separately \cite{eip8037} --- we are not aware of prior work that systematically relates these considerations to the choice of mechanism parameters.
%Although EIP-1559 variants that account for both short- and long-term considerations have been deployed or are pending deployment \cite{offchain2025aipgas,eip8037}, we are not aware of prior work that systematically informs the choice of mechanism parameters based on these considerations.

Our goal is to provide a principled method for setting mechanism parameters --- both adjustment rates and targets --- grounded by a theoretical foundation. 
We hope this framework can inform the design of future EIP-1559-style mechanisms as well as the calibration of existing systems.

\subsection{Our Contributions}

In this work we develop an approach for setting the parameters of EIP-1559 style pricing systems, aimed to achieve the following \emph{economic security guarantee}:
\begin{quote}
    If the gas usage rate exceeds some pre-defined safety threshold over a given time window, then the pricing system must have collected a prohibitively high amount of fees during that time window. Assuming the latter cannot happen, the conclusion is that the safety threshold cannot be violated.
\end{quote}
We achieve this through two conceptual steps, to be described below at a high level.

\paragraph{Finding the Cheapest Distributions of a Fixed Gas Amount}
%First, given an instantiation of the pricing mechanism with specified target and adjustment rate values among other parameters, we identify the cheapest way --- or an approximation thereof --- to distribute gas over a given time window $n$ which results in the \emph{minimum} attainable fee revenue, under the constraint that the average gas usage over the given time window is at least some given amount $U$ --- presumably, $U$ is the safety threshold mentioned above.
Fix an instantiation of the pricing mechanism, including its target, adjustment rate, and other parameters.
Given a time window of length $n$ and a usage threshold $U$, we identify the cheapest way---or an approximation thereof---to distribute gas usage across the window while maintaining an average usage of at least $U$.
Equivalently, we minimize the total fee revenue generated over the window subject to this usage constraint.
We hereafter refer to such a fee minimizing distribution as an \emph{optimal} distribution.

The set of optimal distributions depends heavily on the set parameters of the pricing system.
We prove the existence of an optimal (or approximately optimal) strategy of a particular structure; we call such strategies \emph{simply structured}.

%To find an optimal distribution, which depends heavily on the set parameters of the pricing system, we fix an arbitrary gas distribution $G$ with average gas usage $U$ over the given time window. 
%We then perform a sequence of transformations on $G$ where each transformation can only decrease the fee revenue relative to the previous one (or increase by an upper-bounded amount), while maintaining the same average usage $U$.  

%In particular, we show that as long as the current gas distribution does not admit some simple structure, then such a transformation is always possible.  Since each transformation provably results in a distribution that has not been considered in the sequence before, this procedure must end with a distribution that admits said structure.
%Furthermore, since $G$ was arbitrary, it might as well have been an optimal distribution; we conclude that there exists an optimal (or approximately optimal) distribution admitting the simple structure.

The class of simply-structured distributions is sufficiently restricted that the optimal strategy among them can be found efficiently, and we describe how one might perform that search.
Together with the structural result above, this yields an optimal, or approximately optimal, strategy overall.
%Having established the previous step, we know that the obtained strategy is optimal (or approximately optimal) overall.

\paragraph{Determining Economically Secure Parameters}
%In this step, we assume that we are given a \emph{safe fee revenue amount} $B$, which is presumably the amount of funds that the mechanism designer believes could never be collected in fees over a time window of length $n$, since the amount is too high.
In this step, we assume a given \emph{safe fee revenue amount} $B$, representing a level of fee revenue that the mechanism designer considers infeasible to be spent over a time window of length $n$.

%Then, for each target out of a predetermined discrete interval of candidates (which lies entirely below the safety threshold $U$), we search for the least aggressive adjustment rate for which the minimal fee revenue is at least $B$.
Then, for each target in a predetermined discrete set of candidates, all lying below the safety threshold $U$, we determine the least aggressive adjustment rate for which the minimum attainable fee revenue is at least $B$.

%This process results in a curve on the plane consisting of (target, adjustment rate) points, each corresponding to a choice of parameters that is $B$-economically-secure.
This process results in a curve in the (target, adjustment rate) plane, where each point corresponds to a parameter choice that is $B$-economically secure.

\paragraph{Analyzed Variants}

% We note that there could be other objectives to setting the pricing parameters (e.g. optimizing for revenue or other business purposes) which might interfere with the economic security objective.

We analyze two variants of the EIP-1559 pricing mechanism.
The first is the pure exponential version of Ethereum's original variant \cite{feist2022exponential}.
In this setting, we are able to precisely pinpoint the optimal gas distribution for any given set of parameters.

We then study a variant of EIP-1559 with two gas targets. This mechanism is deployed on Robinhood Chain, while a slightly more complex version of it is deployed on Arbitrum.\footnote{It is likely that Arbitrum will adopt the variant studied in this paper in the future.}
In this case, the optimal gas distribution can be highly complex for certain parameter configuration. We are however able to find a simply-structured distribution whose corresponding fee revenue approximates that of the optimal distribution up to an approximation factor which depends on the set parameters.  

Notably, this approximation factor is shown to be at least twice as good as the highest possible relative price increase in a single block transition, denoted $\tau$. 
This observation is useful since in practice there are external factors that force the mechanism designer to upper bound $\tau$.\footref{fn:wallets_max_price_increase}
%\footnote{For example, wallets rely on bounded price increase rates for UX purposes:  if the price can increase too quickly, then the quote presented by the wallet to a transaction sender might turn out to be too low to cover the transaction's fees at the time of inclusion on-chain.}

Arbitrum and Robinhood Chain, for example, impose a maximum relative price increase of 20\% per second \cite{offchain2025aipgas}.
For any similar mechanism with the same maximum price increase, our results prove a 10\% approximation to the true minimum --- and for Robinhood Chain’s specific parameters, a 6.3\% approximation.
%This in turn implies that for any practical choice of Robinhood Chains's pricing parameters, the minimal fee revenue amount reported by our algorithm constitutes a 10\% approximation to the true minimum.
%\bbc{Go over this.  If the max increase is $r$, then formally the approximation factor is $1+\frac{r}{2}$}
    
\subsection{Structure of the Paper}
Section~\ref{sec:preliminaries} introduces the EIP-1559 variants discussed above, along with useful notation.  Hereafter, the pure exponential version of EIP-1559 is referred to as the ``single target mechanism'', and Robinhood Chain's mechanism is referred to as the ``two target mechanism''.

Section~\ref{sec:single_target} proves the existence of a simply-structured gas distribution that achieves minimum fee revenue for the single target mechanism (Theorem~\ref{thm:structure}), while Section~\ref{sec:two_target_mechanism} establishes an approximately optimal simply-structured distribution for the two target mechanism (Theorem~\ref{thm:approximation}).

%In Section~\ref{sec:single_target} we analyze the single target mechanism, demonstrating the existence of a simply-structured gas distribution that achieves minimum fee revenue --- see Theorem~\ref{thm:structure}.

%In Section~\ref{sec:two_target_mechanism} we analyze the two target mechanism, demonstrating the existence of an approximately optimal simply-structured distribution --- see Theorem~\ref{thm:approximation}.

In Section~\ref{sec:computing_optimal_revenue} we compute the revenue of the optimal simply-structured gas distribution in both mechanisms.
In Section~\ref{sec:setting_parameters} we demonstrate how one might set the parameters in practice and discuss the tradeoffs that arise for that decision.

\subsection{Related Work}
EIP-1559-style mechanisms have been studied extensively in recent years from a variety of perspectives.
\cite{roughgarden2020transactionfeemechanismdesign_eip1559} and \cite{roughgarden2021transactionfeemechanismdesign} provide rigorous treatments of the incentive guarantees of Ethereum's mechanism, whereas \cite{diamandis2023designing} connects EIP-1559-style price updates to gradient-based convex optimization, with the adjustment rate playing the role of a step size.

Several papers have studied how the mechanism's adjustment rate impacts its performance.
\cite{leonardos2021dynamical} and its empirical counterpart \cite{reijsbergen2021honeymoon} use dynamical methods to study the relation between the adjustment rate and the resulting gas usage dynamics. They identify conditions under which sufficiently small adjustment rates lead to convergence of the gas usage rate to the target, while larger adjustment rates can lead to instability and chaos.
\cite{zhivkov2026dynamic} evaluates different adjustment rate values under simulated demand scenarios, examining metrics such as gas utilization and basefee volatility.

Both \cite{ndiaye2026blockchain} and \cite{crapis2025optimal} relate the optimal adjustment rate of EIP-1559-style mechanisms to the gas demand elasticity, with the objective of stabilizing gas usage around its target.
\cite{azouvi2023base} analyzes a base-fee manipulation attack whose expected profitability for proposers increases with the adjustment rate.
\cite{BabaioffNisan2025Welfare} demonstrates that with patient bidders, EIP-1559 achieves approximately optimal welfare, where the obtained approximation factor is closely related to the adjustment rate.

Importantly, these papers assume a gas demand elasticity, whereas we take an adversarial approach resulting in worst-case security bounds.

\section{Preliminaries}
\label{sec:preliminaries}

We consider a finite time horizon of $n \in \mathbb{N}_{>0}$ rounds.
In each round \(i \in \{1,\ldots,n\}\), the integer-valued gas usage is denoted by
\(g_i\), and is subject to the given per-round gas limit $L \in \mathbb{N}_{>0}$, so that
$0 \leq g_i \leq L$.

A \emph{strategy} is a gas-usage vector
$
G=(g_i)_{i=1}^n.
$ 
Fix an average gas usage \(U \in [0,L]\), and assume that \(nU\) is an
integer. We say that \(G\) is a \emph{feasible} strategy if
\[
\sum_{i=1}^n g_i=nU.
\]
We denote the set of feasible strategies by
\[
\mathcal{G}
:=
\left\{
G\in\{0,\ldots,L\}^n:
\sum_{i=1}^n g_i=nU
\right\}.
\]
Given a strategy \(G\),
a gas pricing mechanism
%the mechanism 
determines a \emph{price
trajectory}
\[
P(G)=(p_i)_{i=1}^n,
\]
where \(p_i\) is the price charged for gas used in round \(i\).
%, and the gas usage \(g_i\) determines the price in the following round.
%
The \emph{revenue} associated with \(G\) is
\[
\operatorname{Rev}(G)
:=
\sum_{i=1}^n p_i g_i.
\]
Our objective is to find a revenue minimizing feasible strategy.
A strategy \(G^*\) is \emph{optimal} if
\[
\operatorname{Rev}(G^*)
=
\min_{G\in\mathcal{G}} \operatorname{Rev}(G)
\]
For \(z\in\mathbb{R}\), we use the notation
\[
[z]_+:=\max\{z,0\}.
\]

\subsection{The Single Target Mechanism}
\label{sec:single-target-preliminaries}

The single-target mechanism is specified by an integer valued \emph{gas target} $0<T<L$, an \emph{adjustment rate} $A>0$, a minimum gas price $p_{\min} \in \mathbb{R}_{>0}$ and a starting price $p_{\mathrm{start}} \geq p_{\min}$.
Given the price \(p_i\) and gas usage \(g_i\) in round \(i\), the price in
the following round is
\[
p_{i+1}
=
\max\left\{
p_{\min},
\;
p_i
\exp\left(
\frac{g_i-T}{AT}
\right)
\right\},
\qquad
i=1,\ldots,n.
\]
Although \(p_{n+1}\) is determined by the final gas usage \(g_n\), it does
not enter the revenue over the \(n\)-round horizon.

% \rfc{I find the following confusing because it's not clear why it's needed (all prices are already defined) and what a "gas backlog" is. Alternative below.}

% We define the corresponding \emph{gas backlog}
% \[
% z_i:=AT\log\left(\frac{p_i}{p_{\min}}\right).
% \]
% Then
% \[
% z_1
% =
% AT\log\left(\frac{p_{\mathrm{start}}}{p_{\min}}\right) \ \ , \ \ 
% z_{i+1}
% =
% [z_i+g_i-T]_+ \ \ , \ \ \  i = 1,\ldots,n.
% \]
% The corresponding price is recovered from
% \[
% p_i
% =
% p_{\min}
% \exp\left(\frac{z_i}{AT}\right).
% \]

Alternatively, the single-target mechanism can be defined by using a \emph{gas backlog} $z_i$ to track how much more gas than the target has been used up to round $i$.
The backlog is maintained by
\[
z_{i+1} = [z_i+g_i-T]_+ \ \ , \ \ \  i = 1,\ldots,n,
\]
and the gas price is computed as
\[
p_i = p_{\min} \exp\left(\frac{z_i}{AT}\right).
\]
The initial backlog $z_1$ that corresponds to a starting price of $p_1 = p_{start}$ is
\[
z_1 = AT\log\left(\frac{p_{\mathrm{start}}}{p_{\min}}\right).
\]
We assume that $p_{\mathrm{start}}$ is defined so that $z_1\in \mathbb{N}_{\geq 0}$.
We observe that $p_i = p_{\min}$ iff $z_i=0$.

\subsection{The Two Target Mechanism}
\label{sec:two-target-preliminaries}

The two-target mechanism is specified by two integer valued gas targets
\[
0<T_\ell<T_h<L,
\]
two corresponding adjustment rates
\[
A_\ell>A_h>0,
\]
and a minimum price $p_{\min}\in \mathbb{R}_{>0}$.
The mechanism maintains two integer valued gas backlogs:
\[
\ell_i\geq 0 \ \ , \ \ 
h_i\geq 0
\]
with initial values
\[
\ell_1=\ell_{\mathrm{start}}\ \ , \ \ 
h_1=h_{\mathrm{start}}
\]
which are also specified by the mechanism.
We interpret \(\ell_i\) as the backlog associated with the lower target
\(T_\ell\), and \(h_i\) as the backlog associated with the higher target
\(T_h\).

Given gas usage \(g_i\), the two backlogs evolve according to
\[
\ell_{i+1}
=
[\ell_i+g_i-T_\ell]_+\ \ , \ \ 
h_{i+1}
=
[h_i+g_i-T_h]_+,
\]
The price in round \(i\) is
\[
p_i
=
p_{\min}
\exp\left(
\frac{\ell_i}{A_\ell T_\ell}
+
\frac{h_i}{A_hT_h}
\right).
\]
and the starting price is given by $p_{\mathrm{start}}=p_1$.
In particular, if both initial backlogs equal 0 then $p_{\mathrm{start}}=p_{\min}$.

Note that the gas price can increase by at most a factor of
\[
    1+\tau := \exp\left( \frac{L-T_\ell}{A_\ell T_\ell} + \frac{L-T_h}{A_hT_h}\right)
\]
from one round to the next, which occurs when $L$ units of gas are consumed during the round.

Given the mechanism parameters, the initial state $\ell_1, h_1$, and a strategy \(G\),
the state trajectories
\[
(\ell_i)_{i=1}^{n} \ \ , \ \ 
(h_i)_{i=1}^{n},
\]
and hence the price trajectory
\[
P(G)=(p_i)_{i=1}^n,
\]
are uniquely determined.

\begin{example}
\label{running_example}
A practical example for the mechanism parameters,  similar to those recently used on Robinhood Chain, is:
\begin{align*}
    L        &= 192~\mathrm{Mgas/s},    \quad p_{\min} = 0.02~\mathrm{gwei}\\
    T_\ell   &= 10~\mathrm{Mgas/s},     \quad A_\ell = 86{,}400,\\
    T_h      &= 60~\mathrm{Mgas/s},     \quad A_h = 12.
\end{align*}
This corresponds to a maximum price increase per round of $\tau = 20\%$.

Moreover, a realistic example of a constraint is $U = 11~\mathrm{Mgas/s}$ over $n= 2{,}592{,}000~\mathrm{s}$ (30 days).
We will use these parameters as a \emph{running example} throughout the paper.
\end{example}

\section{The Single Target Mechanism}
\label{sec:single_target}

In this section we identify an optimal strategy for the single target mechanism that has a simple structure.  
Let us fix the time horizon $n$, average gas usage $U$, gas limit $L$ and starting price $p_{\mathrm{start}}$ as described in Section~\ref{sec:preliminaries}.  
We also fix a specific single target mechanism with parameters as described in Section~\ref{sec:single-target-preliminaries}:  a target $T$, adjustment rate $A$ and minimum price $p_{\min}$.
As discussed in that section, a strategy $G=(g_i)$ determines the price trajectory through
\[
p_{i+1}
=
\max\left\{
p_{\min},
\;
p_i
\exp\left(
\frac{g_i-T}{AT}
\right)
\right\},
\qquad
i=1,\ldots,n.
\]
Throughout this section we omit the denominator $\frac{\cdot}{AT}$ from the exponent terms to lighten the notation (equivalently, ``$\exp$'' is interpreted as the base number $e^{\frac{1}{AT}}$).

The rest of this section is devoted to proving

%\begin{theorem}
\begin{restatable}{theorem}{thmstructure}
\label{thm:structure}
There is an optimal strategy $G$ satisfying:
\begin{enumerate}
    \item $G$ is monotone non-decreasing, i.e. $g_i \leq g_{i+1}$ for all $i$.
    \item There is at most one round $i$ for which $0 < g_i < T$. 
    \item There is at most one round $i$ for which $T < g_i < L$.
    % \item If such $j$ exists, then $p_{j+1}=p_{\min}$. \bbc{Prove the following:  Let $j$ be the first round in which $g_j \geq T$.  Then $j$ is also the first round in which $p_j = p_{\min}$, \textbf{or} we must have $g_j = g_{j+1} = \cdots = g_n = L$}
    \item Let $i$ be the first round for which $g_i \geq T$.  Then $p_i = p_{\min}$, or $g_{i+1} = \cdots = g_n = L$.
    \item
    Let $i$ be the first round for which $p_i = p_{\min}$.  
    %If there is a round $j < k < n$ such that $T < g_k$ then $g_j \geq T$.
    Then $g_i \geq T$, or
    there is at most a single round $k$ such that $g_k > T$.
    %all rounds $k$ satisfy $g_k \leq T$.
    % \item \bbe{
    % Suppose $p_{\min} < p_{\text{start}}$, let $1 < j$ be the first round for which $T \leq g_j$, and suppose that $j$ is also the first round for which $p_j = p_{\min}$.
    % }
\end{enumerate}
\end{restatable}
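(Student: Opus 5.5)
Since $\mathcal{G}$ is finite, an optimal strategy exists. The plan is to pick, among all optimal strategies, one that is extremal for a suitable potential. A natural choice is the lexicographically smallest vector $(g_1,\ldots,g_n)$ among optimal strategies, tie-broken if necessary by $\sum_i i\,g_i^2$. We then show that if any of properties 1--5 fails, a local exchange produces a feasible strategy with revenue no larger and strictly smaller potential, a contradiction.

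The basic tool is the backlog representation $z_{i+1}=[z_i+g_i-T]_+$, $p_i=p_{\min}e^{z_i}$ (recall the $\frac{1}{AT}$ convention). It lets us compare two strategies that differ only in rounds $j<k$. Outside $[j,k+1]$ the backlogs agree as long as $z_{k+1}$ does not increase, and $z$ is monotone in each past $g$. So it suffices to control revenue inside the window and check that $z_{k+1}$ does not go up.

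\textbf{Property 1 (monotonicity).} If $g_j>g_{j+1}$, swap them. Let $z=z_j$. The old contribution is $p_{\min}\bigl(e^{z}g_j+e^{[z+g_j-T]_+}g_{j+1}\bigr)$ and the new one is $p_{\min}\bigl(e^{z}g_{j+1}+e^{[z+g_{j+1}-T]_+}g_j\bigr)$. Away from the floor, $z_{j+2}$ is unchanged by the swap. With the floor active, $[[z+a-T]_+ +b-T]_+$ is weakly larger when the larger of $a,b$ comes first, so the swap weakly decreases $z_{j+2}$. The revenue difference reduces to $(g_j-g_{j+1})\bigl(e^{z}-e^{z'}\bigr)$-type terms, which are nonpositive when prices are higher after heavy usage. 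Where the floor clips, a direct case check is needed. The swap strictly decreases the lexicographic potential.

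\textbf{Properties 2 and 3 (at most one fractional round per band).} Take two rounds $j<k$ with $0<g_j\le g_k<T$ (respectively $T<g_j\le g_k<L$). Moving one unit from $j$ to $k$ keeps feasibility. In the band below $T$, if both rounds are at the price floor, moving mass from $j$ to $k$ does not change prices, and in general later prices are at least as high. So I would instead move mass \emph{earlier} or \emph{later} depending on which of $p_j,p_k$ is smaller, using that within the band the exponent map is affine. Concretely, the revenue as a function of the shifted amount $t$ is $e^{\alpha t}$ times linear, hence convex or monotone in $t$. I would show it is concave in the relevant direction (the price at $k$ rises by $e^{t}$ while the quantity at $k$ falls), so the minimum is at an endpoint of the interval where both stay in the band. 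Pushing to an endpoint makes one of the rounds $0$, $T$, or $L$. Monotonicity can be restored by Property 1 without breaking this.

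\textbf{Properties 4 and 5 (interaction with the floor $p_{\min}$).} These are the main obstacle. For 4: let $i$ be the first round with $g_i\ge T$ and suppose $p_i>p_{\min}$ and some later $g$ is below $L$. Since earlier rounds are below $T$, the backlog was being drained but did not reach zero. I would move usage from round $i$ (or later non-saturated rounds) into earlier sub-target rounds, which pay the current higher price. Alternatively, delay high usage so it starts exactly when the price first hits $p_{\min}$, exploiting that below-target usage at rounds where $z>0$ is "free" with respect to future prices: each unit below $T$ used before the floor reduces future $z$ by less than a unit used after. Property 5 is the dual statement: if the floor is reached at round $i$ with $g_i<T$ and there are two rounds exceeding $T$, shift excess mass from the latest above-$T$ round into round $i$ (and neighbors). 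This should save revenue because round $i$ is at minimal price and the backlog-increase lowers the exponent later only if absorbed by the floor. I expect to need an explicit comparison of $e^{z}$ factors across the drained segment here. If a clean exchange fails, I would fall back to parametrizing strategies satisfying 1--3 by a few integers and comparing directly.
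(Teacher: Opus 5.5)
Your overall framework is the same as the paper's iteration over $j$-shifted pairs: take the lexicographically smallest optimal strategy, and show that any violation yields either a lexicographically smaller strategy with no larger revenue or a strictly cheaper one. The problem is the exchange you use for Property 1: it is not valid, and your arguments for Properties 2--4 lean on it.

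Your floor claim is reversed. Since $\bigl[[z+a-T]_+ + b-T\bigr]_+ = \max\{z+a+b-2T,\; b-T,\; 0\}$, with $a>b$ the order ``larger first'' gives the \emph{smaller} backlog. So swapping weakly \emph{increases} $z_{j+2}$, strictly whenever $a>T$ and $z+b<T$, and this raises every later price. For example, take $z_1=0$ and $g_3\ge 1$. The strategy $(T+1,0,g_3)$ has revenue $p_{\min}(T+1+g_3)$, while its swap $(0,T+1,g_3)$ has revenue $p_{\min}\bigl(T+1+e^{1/(AT)}g_3\bigr)$. Away from the floor the swap is fine, but no ``direct case check'' rescues it at the floor.

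The missing idea is to never swap. Instead, move gas from round $j$ to round $j+1$, but only as much as keeps ``$g_j\ge T$ or $g_{j+1}\le T$'' true both before and after the move. This is exactly what forces $p_{j+2}=p'_{j+2}$ (Lemma~\ref{lem:rev G - rev G' cancel out}). Concretely:
\begin{itemize}
\item if $g_j\le T$, shift one unit (Proposition~\ref{prop:g_j_leq_T_g_j+1+leq_T-1});
\item if $g_j>T$ and the excess $g_j-T$ fits into round $j+1$, shift it (Proposition~\ref{prop:g_j + g_j+1 - T leq L});
\item otherwise fill round $j+1$ up to $L$, which needs the concavity argument of Proposition~\ref{prop:g_j + g_j+1 - T > L}.
\end{itemize}
All three moves decrease your lexicographic potential.

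Once monotonicity holds, a violation of Property 2 or 3 already gives an \emph{adjacent} pair in the same band, so the same moves apply. Your non-adjacent transfers are therefore unnecessary; they also change intermediate prices and clipping, which you do not control. The step ``restore monotonicity by Property 1'' is likewise unnecessary.

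For Property 4, your first suggestion goes the wrong way. Before round $i$ the backlog is strictly decreasing, so those rounds are strictly more expensive than round $i$. Moving gas into them raises revenue and also raises your potential. Delaying gas, your alternative, is the right direction:
\begin{itemize}
\item if $g_i>T$, use the shifts above (note that $g_{i+1}<L$);
\item if $g_i=T$, move one unit to round $i+1$. By integrality, $p_i>p_{\min}$ forces $z_i\ge 1$, so round $i$ at usage $T-1$ does not hit the floor and $p_{i+2}$ is unchanged. The computation of Proposition~\ref{prop:g_j_leq_T_g_j+1+leq_T-1} then applies.
\end{itemize}

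For Property 5 your instinct is right, and the move is simpler than you expect: transfer a single unit from round $n$ to round $i$. Since $z_i=0$ and $g_i+1\le T$, and $g_n$ only affects $p_{n+1}$, no price within the horizon changes. Revenue drops by $p_n-p_{\min}$, which is positive because some $k<n$ has $g_k>T$ and monotonicity keeps the backlog positive through round $n$. This contradicts optimality, which is what you need here since this move raises your potential.
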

%\end{theorem}
%
Informally, the theorem says that, save for degenerate cases where $U$ is extremely high or extremely low, there is an optimal strategy that behaves as follows --- in the first rounds it uses no gas until the price decreases all the way to the minimum;  at that point it raises the gas usage to exactly the target with at most a single transition round $0<g_i<T$.  The strategy sustains this for as long as possible, keeping the price at the minimum, until the total gas usage constraint (given by $U$) forces it to raise the usage to the limit through final round, again with at most a single transition round $T<g_i<L$.

%To this end, we consider a total order relation between feasible strategies: ordering them lexicographically.
%We say the strategy $G = (g_i)$ \emph{is dominated} by the strategy $G' = (g'_i)$, denoted by $G' \succ G$, when $G'$ is lexicographically smaller than $G$ (see the appendix for a formal definition).

To this end, we first define a total order relation between feasible strategies, essentially ordering them by lexicographic ordering.
We say the strategy $G = (g_i)$ \emph{is dominated} by the strategy $G' = (g'_i)$, denoted by $G' \succ G$, if:
\begin{itemize} 
\item $[g'_1 < g_1]$, or 
\item $[g'_1 = g_1 \ \& \ g'_2 < g_2]$, or
\end{itemize}

$\vdots$
\begin{itemize}
\item $[g'_1 = g_1 \ \& \ \cdots \ \& \ g'_i = g_i \ \& \ g'_{i+1} < g_{i+1}]$, or
\end{itemize}

$\vdots$
\begin{itemize}
\item $[g'_1 = g_1 \ \& \ \cdots \ \& \ g'_{n-1} = g_{n-1} \ \& \ g'_n < g_n]$.
\end{itemize}

Note that the last condition will never hold as that would contradict the assumption that both $G, G'$ are $U$-average.

We say that a strategy $G$ is \emph{$j$-shiftable} for some round $1 \leq j < n$ if $g_j > 0$ and $g_{j+1} < L$.  In this case it is possible to transfer some gas units from round $j$ to $j+1$.

We say that an ordered pair of strategies $(G,G')$ is \emph{$j$-shifted} for some round $1\leq j < n$ if $G'$ is obtained from $G$ by transferring some gas units from round $j$ to round $j+1$.  Formally:
\begin{enumerate}
    \item For all $i\notin \{j,j+1\}$, $g_i=g'_i$.
    \item $g'_j < g_j$
    \item $g_j+g_{j+1} = g'_j+g'_{j+1}$.
\end{enumerate}
Note that in this case we have $G' \succ G$.

% \rfc{
% If I see correctly, this definition is used in the proof of Theorem 1: Once when applying Proposition 1, and once when proving Bullet 4. I suggest:
% (1) using $G'$ when applying Proposition 1 (to be consistent and since the form of $G^j$ is not relevant then)
% (2) when proving Bullet 4, defining the appropriate $G'$ in place, and removing this definition
% }
% \bbc{done}

% For a $j$-shiftable strategy $G$ we define $G^j$ as the strategy obtained from $G$ by transferring exactly one gas unit from round $j$ to $j+1$, i.e.:
% \begin{align*}
% g^j_i &= g_i \text{ for all } i \notin \{j, j+1\} \\
% g^j_j &= g_j - 1 \\
% g^j_{j+1} &= g_{j+1} + 1 
% \end{align*}
% %Note that $G^j$ is also a $U$-average strategy and that it satisfies $G \succ G^j$.
% We denote the price trajectory corresponding to $G^j$ by $P^j = (p^j_i)_{i=1}^n$.
% Note that $(G,G^j)$ is a $j$-shifted pair.

\begin{lemma}
\label{lem:rev G - rev G' cancel out}
    Let $(G,G')$ be a $j$-shifted strategy pair. 
    %Suppose that one of the following holds:
    % \begin{enumerate}
    %     \item $g_j \geq T$ and $g'_j \geq T$
    %     \item $g_{j+1} \leq T$ and $g'_{j+1} \leq T$
    % \end{enumerate}
    % {\color{cyan}[slightly more general version] 
    Suppose the following two conditions hold:
    \begin{enumerate}
        \item $g_j \geq T$ or $g_{j+1} \leq T$
        \item $g'_j \geq T$ or $g'_{j+1} \leq T$
    \end{enumerate}
    Then $p_{j+2} = p'_{j+2}$. In particular, the difference in revenue between $G$ and $G'$ is determined only by rounds $j$ and $j+1$, i.e.
    \[
    \mathrm{Rev}(G) - \mathrm{Rev}(G') = 
    p_j g_j + p_{j+1} g_{j+1} - \left(p_j g'_j + p'_{j+1} g'_{j+1}\right)
    \]
\end{lemma}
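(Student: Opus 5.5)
We work in the backlog formulation of Section~\ref{sec:single-target-preliminaries}: $z_{i+1}=[z_i+g_i-T]_+$ and $p_i=p_{\min}\exp(z_i)$ (the $AT$ is suppressed, per the convention of this section). Because $G$ and $G'$ agree on rounds $1,\dots,j-1$, we get $z_j=z'_j$ and hence $p_j=p'_j$. Since the prices in rounds $j+2,\dots,n$ are a deterministic function of $z_{j+2}$ and the common usages $g_{j+2},\dots,g_n$, it is enough to prove $z_{j+2}=z'_{j+2}$. From that, every term $p_ig_i$ with $i\notin\{j,j+1\}$ is identical in the two revenues. We also have $p'_j=p_j$, so the displayed formula for $\mathrm{Rev}(G)-\mathrm{Rev}(G')$ follows at once. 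The whole lemma therefore reduces to a two-step backlog identity.

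Write $z:=z_j$ and $s:=g_j+g_{j+1}=g'_j+g'_{j+1}$. The central claim is this: if $g_j\ge T$ or $g_{j+1}\le T$, then
\[
z_{j+2}=\bigl[\,\max\{z+s-2T,\;g_{j+1}-T\}\,\bigr]_+ .
\]
We will sharpen this to show that, under the condition, $z_{j+2}$ equals a quantity depending only on $z$ and $s$. In general,
\[
z_{j+2}=\bigl[[z+g_j-T]_++g_{j+1}-T\bigr]_+=\max\{z+s-2T,\;g_{j+1}-T,\;0\},
\]
using $[[a]_++b]_+=\max\{a+b,b,0\}$. The goal is to show that the middle term $g_{j+1}-T$ never strictly decides the maximum.

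Case $g_{j+1}\le T$: the middle term is $\le 0$, so $z_{j+2}=[z+s-2T]_+$.

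Case $g_j\ge T$: then $z+g_j-T\ge0$ because $z\ge0$. So the inner clamp $[\,\cdot\,]_+$ is inactive, and $z_{j+2}=[z+s-2T]_+$ directly. Equivalently, $z+s-2T=(z+g_j-T)+(g_{j+1}-T)\ge g_{j+1}-T$.

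Either way, $z_{j+2}=[z+s-2T]_+$. Condition 2 gives the identical computation for $G'$, so $z'_{j+2}=[z+s-2T]_+=z_{j+2}$, and the prices agree from round $j+2$ onward.

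The only real subtlety is the nested-clamp identity together with the remark that $z\ge0$ makes the inner clamp vacuous when $g_j\ge T$. Beyond that, the work is bookkeeping: checking that $p_j=p'_j$ and that all later rounds contribute identically. One boundary case needs a note: if $j+1=n$, then $p_{j+2}$ is just $p_{n+1}$, which is outside the horizon, and the revenue identity still holds trivially.
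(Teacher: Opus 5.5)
Your proof is correct and is essentially the paper's argument written in backlog (logarithmic) coordinates: the paper likewise expands $p_{j+2}$ into the three-term maximum $\max\{p_j\exp(g_j+g_{j+1}-2T),\,p_{\min}\exp(g_{j+1}-T),\,p_{\min}\}$. It then shows that either branch of each condition makes the middle term irrelevant, so $p_{j+2}$ depends only on $p_j$ and $g_j+g_{j+1}$. (Minor: the identity $z_{j+2}=\max\{z+s-2T,\,g_{j+1}-T,\,0\}$ holds unconditionally, so your ``central claim'' need not be stated under the hypothesis.)
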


\begin{proof}
    First note that  $p_j=p_j'$ since $G$ and $G'$ are identical before round $j$.
    Inserting the definition of $p_{j+1}$ into that of $p_{j+2}$ yields
    \begin{equation}\label{eq:p_j+2_max3}
        p_{j+2} = \max \left \{ p_j \exp\left(g_j + g_{j+1} -2T\right), p_{min} \exp\left( g_{j+1} - T\right), p_{min} \right\}.
    \end{equation}
    Consider the two cases of Condition 1 of the lemma.
    If the first, $g_j \geq T$, holds, then this together with $p_j\geq p_{min}$ implies that the first expression in \eqref{eq:p_j+2_max3} is larger or equal to the second. Hence,
    \begin{equation}\label{eq:p_j+2_max2}
        p_{j+2} = \max \left \{ p_j \exp\left(g_j + g_{j+1} -2T\right), p_{min} \right\}.
    \end{equation}
    On the other hand, the second case, $g_{j+1} \leq T$, implies $p_{min} \exp(g_{j+1} - T) < p_{min}$ which also shows that \eqref{eq:p_j+2_max2} holds.
    Hence, Condition 1 guarantees that \eqref{eq:p_j+2_max2} holds. 
    
    The analog expression to \eqref{eq:p_j+2_max2} for $p_{j+2}'$ follows from Condition 2. Finally, $g_j+g_{j+1} = g'_j+g'_{j+1}$ and $p_j=p_j'$ imply $p_{j+2} = p'_{j+2}$, as required.
    
\end{proof}

\begin{proposition}\label{prop:g_j_leq_T_g_j+1+leq_T-1}
Let $G$ be a 
%$j$-shiftable 
strategy such that
$g_j \leq T$ and $g_{j+1} \leq T-1$, for some round $j$.
Consider the strategy $G'$ defined as $G$ only that $g'_j = g_j - 1$, $g'_{j+1} = g_{j+1} +1$, and note that $(G,G')$ is a $j$-shifted strategy pair. Then
%If $g_j \leq T$ and $g_{j+1} \leq T-1$, then
$\mathrm{Rev}(G') \leq \mathrm{Rev}(G)$.
\end{proposition}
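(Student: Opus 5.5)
The plan is to reduce the whole comparison to rounds $j$ and $j+1$ using Lemma~\ref{lem:rev G - rev G' cancel out}, and then compare the two local revenues directly by monotonicity of the price update. Implicitly $g_j\geq 1$, since otherwise $g'_j<0$ and $G'$ would not be a strategy. Also $g'_{j+1}=g_{j+1}+1\leq T<L$, so $G'$ is a valid strategy and $(G,G')$ is indeed $j$-shifted.

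\textbf{Step 1: apply the lemma.} I check the two hypotheses of Lemma~\ref{lem:rev G - rev G' cancel out}. Condition 1 holds through its second disjunct, because $g_{j+1}\leq T-1\leq T$. Condition 2 also holds through its second disjunct, because $g'_{j+1}=g_{j+1}+1\leq T$. The lemma therefore gives
\[
\mathrm{Rev}(G)-\mathrm{Rev}(G') = p_j g_j + p_{j+1}g_{j+1} - p_j(g_j-1) - p'_{j+1}(g_{j+1}+1),
\]
and this simplifies to
\[
\mathrm{Rev}(G)-\mathrm{Rev}(G') = \bigl(p_j - p'_{j+1}\bigr) + g_{j+1}\bigl(p_{j+1}-p'_{j+1}\bigr).
\]

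\textbf{Step 2: sign each term.} Since $G$ and $G'$ agree before round $j$, we have $p'_j=p_j$. The update map $g\mapsto \max\{p_{\min},p_j\exp(g-T)\}$ is non-decreasing in $g$, and $g'_j<g_j$, so $p'_{j+1}\leq p_{j+1}$. Together with $g_{j+1}\geq 0$, this makes the second term nonnegative. For the first term, note that $g'_j=g_j-1\leq T-1<T$, so
\[
p'_{j+1}=\max\{p_{\min},\,p_j\exp(g'_j-T)\}\leq \max\{p_{\min},p_j\}=p_j,
\]
where the last equality uses $p_j\geq p_{\min}$. Hence $p_j-p'_{j+1}\geq 0$.

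Adding the two terms gives $\mathrm{Rev}(G')\leq\mathrm{Rev}(G)$. No step here is a real obstacle. The only point needing care is that the hypothesis $g_{j+1}\leq T-1$, rather than $g_{j+1}\leq T$, is exactly what keeps $g'_{j+1}\leq T$, so that Condition 2 of the lemma still holds after the shift. The hypothesis $g_j\leq T$ is what makes $g'_j<T$, so that the price does not rise into round $j+1$ under $G'$.
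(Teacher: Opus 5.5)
Your proof is correct and matches the paper's argument: both disjunct-two conditions of Lemma~\ref{lem:rev G - rev G' cancel out} are checked, the difference is reduced to $(p_j - p'_{j+1}) + g_{j+1}(p_{j+1}-p'_{j+1})$, and nonnegativity follows from monotonicity of the price update. The only cosmetic difference is that you bound $p'_{j+1}\le p_j$ directly from $g'_j<T$, whereas the paper uses the chain $p'_{j+1}\le p_{j+1}\le p_j$, where the last step comes from $g_j\le T$.
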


\begin{proof}
% We first compare the price trajectories of $G$ and $G^j$ at $i=j+1$ and $i=j+2$.
% For $i=j+2$, one of the expressions over which the max is taken in Claim \ref{clm:aux} is different in $P$ as opposed to $P^j$ --- in one we have $p_{\min} \cdot \exp(g_{j+1}-T)$ and in the other $p_{\min} \cdot \exp(g^j_{j+1}-T)$, but these are dominated by $p_{\min}$ since $g_{j+1}, g^j_{j+1} \leq T$.
% Thus we have $p_{j+2} = p^j_{j+2}$.
% For $i=j+1$, we have $p^j_{j+1} \leq p_{j+1} \leq p_j$, where the first inequality holds since $g^j_j < g_j$, and the second inequality holds since $g_j \leq T$.
% Combining with Lemma \ref{lem:p_i=p'_i}, we have
Note that both conditions of Lemma~\ref{lem:rev G - rev G' cancel out} hold for the $j$-shifted pair $(G,G')$.  Thus we have

\begin{align*}
\mathrm{Rev}(G) - \mathrm{Rev}(G') &= p_j g_j + p_{j+1} g_{j+1} - \left(p_j g'_j + p'_{j+1} g'_{j+1}\right)\\ 
&= p_j g_j + p_{j+1} g_{j+1} - \left( p_j(g_j-1) + p'_{j+1}(g_{j+1}+1) \right) \\
&= (p_j - p'_{j+1})  + g_{j+1}(p_{j+1}-p'_{j+1}) 
\end{align*}
and this is non-negative since $p'_{j+1} \leq p_{j+1} \leq p_j$.

\end{proof}

% \begin{proposition}
% \label{prop: T < g_j+1 < g_j}
%     Let $G$ be a strategy for which $T < g_{j+1} < g_j$ for some round $j$.  Consider the strategy $G'$ defined as $G$ only that $g'_j = g_{j+1}$, $g_{j+1} = g_j$.  Notice that $G' \succ G$.  Then $\mathrm{Rev} (G') < \mathrm{Rev}(G)$.
% \end{proposition}

% \begin{proof}
% In this case the assumptions of Lemma \ref{lem:p_j+2=p'_j+2} are satisfied and in particular we have $p_{j+2} = p'_{j+2}$.  Combining with Lemma \ref{lem:p_i=p'_i}, we have \bbc{It remains to justify the Taylor approximation below.}
% \begin{align*}
%     \mathrm{Rev}(G) - \mathrm{Rev}(G') &= p_jg_j +p_{j+1}g_{j+1} - (p_j g_{j+1} + p'_{j+1}g_j) \\
%     &= p_j\left[
%     g_j + \exp(g_j - T)g_{j+1} - g_{j+1}-\exp(g_{j+1}-T)g_j 
%     \right] \\
%     &\approx p_j\left[
%     g_j + (1 + g_j - T)
%     g_{j+1} - g_{j+1}-
%     (1 + g_{j+1}-T)g_j 
%     \right] \\
%     &\approx p_j
%     T (g_j - g_{j+1}) \\
%     & > 0
% \end{align*}
% \end{proof}

\begin{proposition}\label{prop:g_j + g_j+1 - T leq L}
Let $G$ be a strategy such that $g_j > T$ and
%$g_{j+1} = 0$.
$g_j+g_{j+1}-T \leq L$, for some round $j$.
Consider the strategy $G'$ defined as $G$ only that 
$g'_j = T$, 
%$g'_{j+1} = g_j - T$.
$g'_{j+1} = g_{j+1}+ g_j - T$,
and note that 
%$G' \succ G$.
$(G,G')$ is a $j$-shifted strategy pair.
Then $\mathrm{Rev}(G') \leq \mathrm{Rev}(G)$.
\end{proposition}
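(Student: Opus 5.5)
Apply Lemma~\ref{lem:rev G - rev G' cancel out} to the pair $(G,G')$, then compare the two revenue contributions of rounds $j$ and $j+1$ directly. Both conditions of the lemma hold: $g_j > T$ gives Condition 1, and $g'_j = T \geq T$ gives Condition 2. So $p_{j+2}=p'_{j+2}$, and every round $i \ge j+2$ contributes identically to both revenues. Rounds before $j$ coincide as well, so $p'_j=p_j$, and
\[
\mathrm{Rev}(G)-\mathrm{Rev}(G') = p_j g_j + p_{j+1} g_{j+1} - \left(p_j T + p'_{j+1}(g_{j+1}+g_j-T)\right).
\]
The hypothesis $g_j+g_{j+1}-T\le L$ only guarantees that $G'$ is a legal strategy.

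Next we compute the two prices at round $j+1$. Since $g_j>T$ and $p_j\ge p_{\min}$, the max in the update rule is attained by the exponential term, so $p_{j+1}=p_j\exp(g_j-T)$. Since $g'_j=T$, we get $p'_{j+1}=p_j$. Writing $d:=g_j-T>0$ and dividing by $p_j>0$, the difference becomes
\[
\frac{\mathrm{Rev}(G)-\mathrm{Rev}(G')}{p_j} = (T+d) + e^{d} g_{j+1} - T - (g_{j+1}+d) = g_{j+1}\left(e^{d}-1\right) \ge 0,
\]
since $e^{d}\ge 1$ (recall $\exp$ is base $e^{1/(AT)}$ here) and $g_{j+1}\ge 0$.

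There is little obstacle in this argument. The only points needing care are checking that the $\max$ with $p_{\min}$ is inactive for $p_{j+1}$ and for $p'_{j+1}$, and confirming that $g'_{j+1}\le L$ and $g'_j\ge 0$, so that $G'$ is feasible and the $j$-shift is valid. Both follow from the stated hypotheses.
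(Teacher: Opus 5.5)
Your proof is correct and matches the paper's argument. You apply the cancellation lemma for $j$-shifted pairs, using $g_j>T$ and $g'_j=T$ for its two conditions, then observe $p'_{j+1}=p_j$ and $p_{j+1}=p_j\exp(g_j-T)$, which reduces the revenue difference to $g_{j+1}(p_{j+1}-p_j)\ge 0$.
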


\begin{proof}
Note that in this case we have $p'_{j+1} = p_j$, $p_{j+1} > p_j$, 
% and that the assumptions of Lemma \ref{lem:p_i=p'_i} and Lemma \ref{lem:p_j+2=p'_j+2} are again satisfied.
and that both conditions of Lemma~\ref{lem:rev G - rev G' cancel out} hold.
Therefore we have

\begin{align*}
\mathrm{Rev}(G) - \mathrm{Rev}(G') &= p_j g_j +p_{j+1}g_{j+1} - (p_j T + p_j(g_{j+1} + g_j -T)) \\
&= g_{j+1}(p_{j+1} - p_j) \\
% &= g_{j+1}p_j(\exp(g_j - T) - 1) \\ 
&\geq 0
\end{align*}

\end{proof}

\begin{proposition}
\label{prop:g_j + g_j+1 - T > L}
Let $G$ be a strategy such that $g_j > T$, $g_{j+1} < L$ and
%$g_{j+1} = 0$.
$g_j+g_{j+1}-T > L$, for some round $j$.
Consider the strategy $G'$ defined as $G$ only that 
$g'_j = g_j + g_{j+1} - L$, 
%$g'_{j+1} = g_j - T$.
$g'_{j+1} = L$,
and note that $(G,G')$ is a $j$-shifted strategy pair.
Then $\mathrm{Rev}(G') \leq \mathrm{Rev}(G)$.
\end{proposition}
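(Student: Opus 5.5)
The plan is to reduce the revenue comparison to rounds $j$ and $j+1$ using Lemma~\ref{lem:rev G - rev G' cancel out}, and then to prove the resulting scalar inequality with a convexity (secant-slope) argument for the exponential.

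\emph{Checking the lemma applies.} Condition~1 holds since $g_j > T$. For Condition~2, note that $g'_j = g_j + g_{j+1} - L > T$ by the hypothesis $g_j+g_{j+1}-T > L$. Also $(G,G')$ is a genuine $j$-shifted pair:
\begin{itemize}
\item $g'_j < g_j$ because $g_{j+1} < L$;
\item $g'_j \ge 0$ and $g'_{j+1} = L$ is within the limits.
\end{itemize}
Hence
\[
\mathrm{Rev}(G)-\mathrm{Rev}(G') = p_j g_j + p_{j+1}g_{j+1} - p_j g'_j - p'_{j+1}L .
\]
Since $g_j > T$ and $g'_j > T$, the $\max$ with $p_{\min}$ is inactive in both transitions. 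So, restoring the scale $c = 1/(AT)$ temporarily to be careful with units, we get $p_{j+1} = p_j e^{c(g_j-T)}$ and $p'_{j+1} = p_j e^{c(g'_j-T)}$.

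\emph{Reducing to a scalar inequality.} Set $s := g_j - g'_j = L - g_{j+1} \in (0, L]$ and $a := g'_j - T > 0$, so that $g_j - T = a+s$. Dividing by $p_j>0$, the difference becomes
\[
s + e^{c(a+s)}(L-s) - e^{ca}L .
\]
Rearranging, nonnegativity is equivalent to
\[
L\bigl(e^{c(a+s)} - e^{ca}\bigr) \;\ge\; s\bigl(e^{c(a+s)} - 1\bigr).
\]

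\emph{The key step: convexity.} This inequality is the main point, and the naive approach (expanding $e^s(L-s)-L$) can look negative for large $s$, so I would argue through secant slopes instead. Let $f(t) = e^{ct}$, which is convex. The secant slope of $f$ over $[a,a+s]$ is at least its secant slope over $[0,a+s]$, since both intervals share the right endpoint and the first lies to the right. That is,
\[
\frac{f(a+s)-f(a)}{s} \ge \frac{f(a+s)-f(0)}{a+s}.
\]
Therefore
\[
L\bigl(f(a+s)-f(a)\bigr) \ge \frac{L s}{a+s}\bigl(f(a+s)-1\bigr).
\]
It remains to note that $a+s = g_j - T \le L - T < L$, so $L/(a+s) \ge 1$, and the right-hand side is at least $s(f(a+s)-1)$, as needed.

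This gives $\mathrm{Rev}(G') \le \mathrm{Rev}(G)$. The only delicate points are verifying that the $p_{\min}$ floor is inactive in both trajectories (handled by $g'_j>T$) and choosing the right comparison of secant slopes.
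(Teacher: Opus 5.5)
Your proof is correct and follows essentially the paper's route: both use Lemma~\ref{lem:rev G - rev G' cancel out} to reduce to the same two-round scalar inequality, and both settle it by convexity of the exponential. The paper writes the last step as concavity of $F(x)=x(e^{a}-1)-L(e^{x+a-L}-1)$ in $x=g_{j+1}$ (its $a=g_j-T$ is your $a+s$), with endpoint checks $F(L)=0$ and $F(L-a)=(L-a)(e^{a}-1)\ge 0$, which is equivalent to your secant-slope comparison followed by the bound $L\ge g_j-T$.
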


% \begin{proof}
%     Note that the assumptions of Lemma \ref{lem:p_i=p'_i} and Lemma \ref{lem:p_j+2=p'_j+2} are satisfied.  Therefore we have \bbc{It remains to justify the Taylor approximation below}
%     \begin{align*}
%         \mathrm{Rev}(G) - \mathrm{Rev}(G') &= p_j g_j +p_{j+1}g_{j+1} - p_j(g_j + g_{j+1} - L) - p'_{j+1}L \\
%         &=p_j g_j +p_j \exp(g_j - T)g_{j+1} - p_j(g_j + g_{j+1} - L) - p_j L \exp(g_j+g_{j+1} - L - T) \\
%         &= p_j\left[
%          \exp(g_j - T)g_{j+1} - g_{j+1} + L - L\exp(g_j+g_{j+1} - L - T)
%         \right] \\
%         &\approx p_j\left[
%          (1 + g_j - T)g_{j+1} - g_{j+1} + L - L(1+g_j+g_{j+1} - L - T)
%         \right] \\
%         &=p_j\left[
%          ( g_j - T)(g_{j+1} - L) - L(g_{j+1} - L)
%         \right] \\ 
%         &= p_j\left[
%          ( g_j - T - L)(g_{j+1} - L)
%         \right] \\
%         &>0
%     \end{align*}
% \end{proof}

\begin{proof}
    % The hypotheses give $g_j > T$, hence $p_{j+1} = p_j \exp(g_j - T)$, and
    % $g'_j = g_j + g_{j+1} - L > T$ (using $g_j + g_{j+1} - T > L$), hence
    % $p'_{j+1} = p_j \exp(g'_j - T) = p_j \exp(g_j + g_{j+1} - L - T)$.
    % Thus the assumptions of Lemma~\ref{lem:p_i=p'_i} and Lemma~\ref{lem:p_j+2=p'_j+2}
    % hold, so only rounds $j$ and $j+1$ contribute to the revenue difference:
    Note that both conditions of Lemma~\ref{lem:rev G - rev G' cancel out}
    hold.  Therefore we have
    
    \begin{align*}
        \mathrm{Rev}(G) - \mathrm{Rev}(G')
        &= p_j g_j + p_{j+1} g_{j+1} - p_j(g_j + g_{j+1} - L) - p'_{j+1} L \\
        &= p_j g_j + p_j \exp(g_j - T)\, g_{j+1} \\
        &\qquad - p_j(g_j + g_{j+1} - L) - p_j L \exp(g_j + g_{j+1} - L - T) \\
        &= p_j \Bigl[\, g_{j+1}\bigl(\exp(g_j - T) - 1\bigr)
                       - L\bigl(\exp(g_j + g_{j+1} - L - T) - 1\bigr) \,\Bigr].
    \end{align*}
    Set $a := g_j - T$; in particular we have $0 < a$ and $0 <  L - a$. Define
    \[
        F(x) \;:=\; x\bigl(\exp(a) - 1\bigr) - L\bigl(\exp(x + a - L) - 1\bigr),
        \qquad x \in [L - a,\, L].
    \]
    The bracketed expression above equals $F(g_{j+1})$, and the proposition assumptions imply that
    $g_{j+1} \in (L - a,\, L)$.
    % : the lower bound rewrites
    % $g_j + g_{j+1} - T > L$, and the upper bound is $g_{j+1} < L$.
    It therefore
    suffices to show that $F(x) \geq 0$ on $[L - a,\, L]$.

    Differentiating twice gives
    $F''(x) = -L\, \exp(x + a - L) < 0$,
    so $F$ is strictly concave on $[L - a,\, L]$.  It suffices therefore to show that $F$ is non-negative at the end points.  Indeed, we have
    \begin{align*}
        F(L) &= L\bigl(\exp(a) - 1\bigr) - L\bigl(\exp(a) - 1\bigr) = 0, \\
        F(L - a) &= (L - a)\bigl(\exp(a) - 1\bigr) - L\bigl(\exp(0) - 1\bigr)
                  = (L - a)\bigl(\exp(a) - 1\bigr) \;\geq\; 0.
    \end{align*}
    % For any $x \in [L - a,\, L]$, writing $x$ as the convex combination
    % $x = \tfrac{L - x}{a}(L - a) + \tfrac{x - (L - a)}{a}\, L$, concavity yields
    % \[
    %     F(x) \;\geq\; \tfrac{L - x}{a}\, F(L - a) + \tfrac{x - (L - a)}{a}\, F(L)
    %           \;=\; \tfrac{(L - x)(L - a)\bigl(\exp(a) - 1\bigr)}{a}
    %           \;\geq\; 0.
    % \]
    Hence $\mathrm{Rev}(G) - \mathrm{Rev}(G') = p_j\, F(g_{j+1}) \geq 0$, as required.
    
\end{proof}

% \rfc{When reading the proof it's a bit annoying that the bullet points are 3 pages up. maybe restate them here?}
We are now ready to proof Theorem~\ref{thm:structure}.  For clarity we restate it here.

\thmstructure*

\begin{proof}[Proof of Theorem~\ref{thm:structure}]
% Let $G$ be an optimal strategy. 
% It is enough to show that if any of the bullets does not hold but the ones preceding it do hold, then we can find a strategy $G'$ 
% %that satisfies that bullet and also dominates $G$.
% such that either $(G,G')$ is a $j$-shifted strategy pair and $G'$ is optimal (bullets 1-4), or $G'$ is strictly revenue-decreasing in contradiction to the optimality of $G$ (bullet 5).
% This is indeed enough since the number of strategies is finite and repeating the above in iterations always obtains a new strategy that we has not been considered before (since $\succ$ is an order relation).
% Therefore the process must end with an optimal strategy which satisfies all bullets.
% \rfc{I find this paragraph very difficult to understand! I suggest separating the 1-4 and 5 logic (see below).}
%
Let $G$ be an optimal strategy.
To prove bullets 1-4 it suffices to show that if any of the bullets does not hold but the ones preceding it do hold, then we can find an optimal strategy $G'$ 
such that $(G,G')$ is a $j$-shifted strategy pair.
This is indeed enough since the number of strategies is finite and repeating the above in iterations always obtains a new strategy that we has not been considered before (since $\succ$ is an order relation).
Therefore the process must end with an optimal strategy which satisfies all bullets.

Finally, we prove bullet 5 by showing that bullets 1-4 holding while bullet 5 does not, contradicts the optimality of $G$.

To this end, let us assume that bullet 1 does not hold, i.e., there is some round $j$ for which $g_j > g_{j+1}$. If $g_j \leq T$, we can apply Proposition \ref{prop:g_j_leq_T_g_j+1+leq_T-1} to $G$.
The obtained strategy $G'$ satisfies $\mathrm{Rev}(G') \leq \mathrm{Rev}(G)$ and therefore $G'$ must be optimal as well.
Similarly, if $g_j > T$ and 
%$g_{j+1} \geq 1$ then we can apply Theorem \ref{thm:2}. 
%$g_{j+1} \leq T$, 
$g_j+g_{j+1}-T \leq L$,
then we can apply Proposition \ref{prop:g_j + g_j+1 - T leq L}.
If $g_j > T$
%and $g_{j+1} = 0$, then we apply Theorem \ref{thm:3}. 
%and $g_{j+1} > T$, then we can apply Theorem \ref{prop: T < g_j+1 < g_j}.
and $g_j+g_{j+1}-T > L$, then we can apply Proposition \ref{prop:g_j + g_j+1 - T > L} (note that $g_{j+1} < g_j$ implies in particular that $g_{j+1} < L$).
{This completes the treatment of bullet 1.} 

Now let us assume that $G$ satisfies bullet 1 but does not satisfy bullet 2. The monotonicity of $G$ implies the existence of two consecutive rounds $j, j+1$ such that $0 < g_j, g_{j+1} < T$.
But note that in this case 
%the assumptions of Proposition \ref{prop:g_j_leq_T_g_j+1+leq_T-1} hold and so $G^j$ is also optimal and dominates $G$.
we can apply Proposition~\ref{prop:g_j_leq_T_g_j+1+leq_T-1} to obtain the desired strategy $G'$.

Bullet 3 can be proven similarly to bullet 2, but using Proposition \ref{prop:g_j + g_j+1 - T leq L} or Proposition \ref{prop:g_j + g_j+1 - T > L} instead of Proposition \ref{prop:g_j_leq_T_g_j+1+leq_T-1}, depending on whether we have $g_j+g_{j+1}-T \leq L$ or not.

Assume bullets 1,2 and 3 hold but bullet 4 does not hold.  Let $i$ be the first round for which $g_i \geq T$. Since bullet 4 does not hold and $G$ is monotone non-decreasing then we have $p_j > p_{\min}$ and $g_j \leq g_{j+1} < L$.  First, note that if $T < g_j$ then we can use either Proposition \ref{prop:g_j + g_j+1 - T leq L} or \ref{prop:g_j + g_j+1 - T > L} to find a strategy $G'$ such that $(G,G')$ is a $j$-shifted pair and $\mathrm{Rev}(G') \leq \mathrm{Rev}(G)$ as we want.

We therefore assume that $g_j = T$.  
Consider the strategy $G'$ defined as $G$ only that $g'_j = g_j - 1$, $g'_{j+1} = g_{j+1} +1$, and note that $(G,G')$ is a $j$-shifted strategy pair.
To prove $\mathrm{Rev}(G') \leq \mathrm{Rev}(G)$, we first argue that $p_{j+2} = p'_{j+2}$ (note that as opposed to previous cases, we cannot directly apply Lemma~\ref{lem:rev G - rev G' cancel out} here).
To this end, we first observe that  $p_j' = p_j > p_{\min}$ implies $p_j' \geq p_{\min}\exp(1)$, since prices are always of the form $p_{\min}\exp(z)$ for $z \in \mathbb{N}_{\geq 0 }$.  This in turn implies
\[
p_j' \exp (g'_j-T) = p_j' \exp (g_j-1-T) = p_j' \exp(-1) \geq p_{\min},
\] 
which further implies $p'_{j+1} = \max\{p'_{j}\exp(g'_{} - T), p_{\min}\} =  p_j \exp (g'_j-T)$ and
\begin{align*}
p'_{j+2}   &= \max\{p'_{j+1}\exp(g'_{j+1} - T), p_{\min}\} \\ 
            &= \max\{p_j'\exp(g'_j + g'_{j+1} - 2T), p_{\min}\}.
\end{align*}
The equivalent identity for $p_{j+2}$ follows from $p_j \exp(g_j - T) = p_j \geq p_{\min}$.
Finally, $g_j+g_{j+1} = g'_j+g'_{j+1}$ implies $p_{j+2} = p'_{j+2}$.
From this point, the proof that $\mathrm{Rev}(G') \leq \mathrm{Rev}$ is identical to the proof of Proposition~\ref{prop:g_j_leq_T_g_j+1+leq_T-1}.

Now suppose bullets 1,2,3,4 hold but bullet 5 does not, and let $j$ be the first round such that $p_j = p_{\min}$.  By assumption, we have $g_j < T$, and there are at least two rounds after $j$ with usage strictly above $T$.
In particular, there is a round $j<k\leq n - 1$ such that $g_k > T$, 
which in turn implies by the monotonicity of $G$ that $p_{\min} < p_n$ and $ T < g_n$.  Consider the strategy $G'$ obtained from $G$ by transferring 1 gas unit from round $n$ to round $j$.  The price trajectory stays exactly the same, implying that the only change in revenue is from the price paid for that transferred gas unit.  From here we get $\mathrm{Rev}(G') = \mathrm{Rev}(G)-(p_n -p_{\min}) < \mathrm{Rev}(G)$, contradicting the optimality of $G$.

\end{proof}

The following observation is a byproduct of the proof and will be used later in the paper.

\begin{observation}
\label{obs:ending_price_can_only_decrease}
    All strategy transformations applied in the proof of Theorem~\ref{thm:structure} do not increase the ending price $p_{n+1}$.
\end{observation}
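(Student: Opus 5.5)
We argue transformation by transformation, since the observation only asks that each individual step used in the proof of Theorem~\ref{thm:structure} satisfies $p'_{n+1} \leq p_{n+1}$; chaining gives the claim. The proof uses two kinds of steps. The first kind consists of $j$-shifted pairs $(G,G')$ coming from Propositions~\ref{prop:g_j_leq_T_g_j+1+leq_T-1}, \ref{prop:g_j + g_j+1 - T leq L}, \ref{prop:g_j + g_j+1 - T > L}, and from the bullet-4 case with $g_j=T$. The second kind is the bullet-5 step, which moves one unit from round $n$ to round $j$. The bullet-5 step, however, never produces a new optimal strategy: it only derives a contradiction. So it is not among the transformations actually applied, and only the shifted pairs need checking.

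For a $j$-shifted pair we always have $p_{j+2}=p'_{j+2}$. This follows from Lemma~\ref{lem:rev G - rev G' cancel out} for the three propositions, whose hypotheses were verified in their proofs, and from the direct argument in the bullet-4 case. Since $G$ and $G'$ agree on rounds $j+2,\dots,n$, the update map $p\mapsto \max\{p_{\min},p\exp(g-T)\}$ is applied to equal inputs with equal usages. Inducting on the round index then gives $p_i=p'_i$ for all $i\geq j+2$, and in particular $p_{n+1}=p'_{n+1}$ whenever $j+2\leq n+1$, i.e.\ $j\leq n-1$. This always holds, because shifts require $1\leq j<n$. So in fact the ending price is unchanged.

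The one point needing care is the edge case $j=n-1$. Here $p_{j+2}=p_{n+1}$ itself, so Lemma~\ref{lem:rev G - rev G' cancel out} (and the bullet-4 computation) already gives $p'_{n+1}=p_{n+1}$ directly. We must only check that the lemma's derivation of $p_{j+2}$ does not rely on round $j+2$ existing inside the horizon. It does not: $p_{n+1}$ is defined by the same recursion. We therefore expect no real obstacle beyond confirming that every applied step is a $j$-shifted pair satisfying the lemma's conditions (or the bullet-4 variant), and that the bullet-5 transfer is used only for contradiction. The conclusion is in fact the stronger statement that $p_{n+1}$ stays equal.
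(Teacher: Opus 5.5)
Your treatment of the $j$-shifted steps is correct and matches the paper's reasoning. In each of the three shifting propositions, and in the $g_j=T$ case of bullet 4, one has $p_{j+2}=p'_{j+2}$ and identical usage in all later rounds. Hence $p_{n+1}$ is unchanged, including the edge case $j=n-1$.

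The gap is that you argue away the bullet-5 transfer instead of checking it. Within the proof of Theorem~\ref{thm:structure} that step indeed only yields a contradiction. But the paper counts it among the transformations: the sentence right after the observation says it is the one step that may strictly lower $p_{n+1}$. This matters for how the observation is used. In the proof of Theorem~\ref{thm:approximation}, the steps of Theorem~\ref{thm:structure} are run on the first part of $\overline{G}$, which is not assumed optimal. There, moving a unit from round $n$ to round $j$ is a genuine revenue-decreasing step rather than a contradiction. Consequently your closing claim that $p_{n+1}$ always stays equal is false for that step.

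The missing check is short. Let $j$ be the first round with $p_j=p_{\min}$, so that $g_j<T$ when bullet 5 fails. Then $g_j+1\le T$ gives $p'_{j+1}=\max\{p_{\min},p_{\min}\exp(g_j+1-T)\}=p_{\min}=p_{j+1}$. Since usage is unchanged in rounds $j+1,\dots,n-1$, all prices $p_1,\dots,p_n$ are unchanged. Since $g_n>T$ and $p_n>p_{\min}$, we get $p'_{n+1}=\max\{p_{\min},p_n\exp(g_n-1-T)\}=p_n\exp(g_n-1-T)<p_n\exp(g_n-T)=p_{n+1}$. Adding this case, and dropping the claim that the ending price is always preserved, makes your argument complete.
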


In fact, only the transformation proving bullet 5 might strictly decrease the price, while all other transformations in the theorem's proofs leave the ending price unchanged.

\begin{corollary}\label{cor:final}
    Suppose that $p_{\text{start}} = p_{\min}$
    and that $U \geq T$.
    %and that $Un \geq (T-1)n + L + 1$ \bbc{$Un \geq UT$}.  
    Suppose also that $(U-T)n  = (L-T)q + r$, for some $q,r \in \mathbb{N}_{\geq 0}$ such that $0\leq r < L-T$. 
    % Then there is an optimal strategy $G$ such that
    % \[g_i = 
    % \begin{cases} 
    %     T   & \text{if } i< n - q \\
    %     T + r   & \text{if } i = n-q \\
    %     L   & \text{if } i > n-q
    % \end{cases}
    % \]
    Then there is an optimal strategy $G$ of the form
    \[
    G = \big(\underbrace{T,\ldots,T}_{n-q-1\text{ rounds}},\;\; T+r,\;\; \underbrace{L,\ldots,L}_{q\text{ rounds}}\big)
    \]
\end{corollary}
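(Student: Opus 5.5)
Take the optimal strategy $G$ provided by Theorem~\ref{thm:structure} and use $p_{\text{start}}=p_{\min}$ and $U\geq T$ to pin down its shape exactly.

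\textbf{Step 1: the first round already sits at the minimum price.} Since $p_1=p_{\min}$, round $1$ is the first round with $p_i=p_{\min}$. Bullet 5 then yields two cases. Either $g_1\geq T$, or at most one round has usage above $T$.

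\textbf{Step 2: rule out usage below $T$.}
\begin{itemize}
\item In the first case, monotonicity (bullet 1) gives $g_i\geq T$ for all $i$.
\item In the second case, suppose some $g_i<T$. At most one round exceeds $T$, and by bullet 3 every round above $T$ other than the transition round equals $L$.
\item Monotonicity then forces the strategy to look like: a prefix of zeros, at most one round in $(0,T)$, a stretch of rounds equal to $T$, and at most one round above $T$.
\item Its total is then at most $(n-1)T+L$ minus the deficit of the rounds below $T$. I will compare this with $nU\geq nT$ directly.
\item More cleanly: the sum constraint $\sum_i g_i=nU\geq nT$ together with some $g_i<T$ forces some round $k$ with $g_k>T$. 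Monotonicity places $k$ after $i$.
\item Now transfer one unit from a round above $T$ to the last round below $T$. Because $p_1=p_{\min}$, the price stays at $p_{\min}$ up to round $k$ (usage never exceeds $T$ before it), and the backlog at $k$ is $0$. The transfer therefore lowers $g_k$ while keeping the price there at $p_{\min}$, and it lowers all later prices weakly.
\item So revenue does not increase, and the resulting strategy is lexicographically different in the right direction. This is precisely the kind of argument used for bullet 5.
\item I expect this to be the main obstacle: arguing rigorously that an optimal structured strategy with $p_1=p_{\min}$ and $U\geq T$ has no round below $T$. I will iterate using the order $\succ$, or pick among optimal strategies one that is lexicographically extremal, exactly as in the theorem's proof.
\end{itemize}

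\textbf{Step 3: identify the transition and the count of $L$ rounds.} Once all $g_i\geq T$, bullets 1 and 3 give the form
\[
G=(T,\ldots,T,\;T+s,\;L,\ldots,L),
\]
with $0\leq s<L-T$ and some number $q'$ of $L$ rounds. Summing gives
\[
(U-T)n=(L-T)q'+s.
\]
Uniqueness of division with remainder by $L-T$ then yields $q'=q$ and $s=r$.

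If $r=0$, the transition round equals $T$, and the claimed form still holds with $T+r=T$. If $q=n$, the count $n-q-1$ would be negative. But $U\leq L$, and $U=L$ forces $r=0$, so in that case I will interpret the formula degenerately, or note that $U<L$ gives $q\leq n-1$.
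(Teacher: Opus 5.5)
Your proposal is correct, but it handles the second alternative of bullet~5 by a longer route than the paper. The paper splits instead on whether $Un-T(n-1)\le L$.

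\textbf{First case} ($Un-T(n-1)\le L$). The paper directly exhibits $(T,\ldots,T,\,Un-T(n-1))$. This strategy keeps every price at $p_{\min}$, so it attains the trivial lower bound $p_{\min}nU$ and is optimal.

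\textbf{Second case} ($Un-T(n-1)> L$). If at most one round exceeded $T$, then $nU=\sum_i g_i\le (n-1)T+L$, a contradiction. So bullet~5 gives $g_1\ge T$, and monotonicity plus bullet~3 finish exactly as in your Step~3.

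\textbf{Your branch.} Your transfer argument is valid but unnecessary. In the branch ``at most one round above $T$'', monotonicity puts that round (if it exists) at round $n$. Hence every round before $n$ uses at most $T$, and all prices of $G$ are already $p_{\min}$. Moreover $nU\le (n-1)T+L$, so the target strategy is feasible and equally optimal. This is precisely the paper's first case, reached without any transfers.

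\textbf{If you keep the iteration.} Two details need fixing.
\begin{enumerate}
\item Moving a unit to an earlier round makes the strategy lexicographically \emph{larger}. The new strategy is therefore dominated by $G$ in the paper's order $\succ$, not the reverse.
\item Choosing a lexicographically extremal optimal strategy does not work as stated, since such a strategy need not satisfy the bullets of Theorem~\ref{thm:structure}. For example, with $n=3$, $T=U=2$ and $L\ge 6$, the strategy $(6,0,0)$ is optimal but not monotone.
\end{enumerate}
Termination is easiest to see directly: each transfer keeps rounds $1,\ldots,n-1$ at most $T$ and lowers $\sum_i [T-g_i]_+$ by exactly one.

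Your remark on the degenerate case $q=n$ (equivalently $U=L$), which the paper leaves implicit, is a fair catch.
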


\begin{proof}
    By Theorem \ref{thm:structure}, there is an optimal strategy $G$ that satisfies all bullets in the theorem statement.  Notice that by our assumption, the first round $j$ for which $p_j = p_{\min}$ is $j=1$.  We split to cases:
    \paragraph{$\bm{Un - T(n-1) \leq L }$:}  In this case, the strategy $G$ with $g_1=\cdots=g_{n-1}=T$ and $g_n = Un - T(n-1)$ is feasible.
    It is optimal since it attains $p_{\min}$ in all rounds --- and it is exactly the strategy stated in the corollary.
    
    \paragraph{$\bm{Un - T(n-1) \geq L + 1}$:}
    %Since $Un \geq (T-1)n + L + 1$, then
    In this case,
    there must be at least two rounds $k$ for which $T < g_k$.  Therefore, by bullet 5 of Theorem~\ref{thm:structure} we have $T \leq g_1$, which further implies by monotonicity that all rounds $k$ satisfy $T \leq g_k$.
    Therefore, $Tn$ of the total gas units are already accounted for,  it remains to allocate the $(U-T)n$ remaining gas units,
    and for each $i$ the additional gas on top of $T$ is between $0$ and $L-T$. 

    By bullet 3, there can be at most one round $i$ such that $g_i \notin \{T,L\}$, and $i$ if exists we have $T < g_i < L$.
    The only monotone sequence $(g_k)$ satisfying these requirements is the one specified in the corollary statement.
    % By bullet 3, there can be at most one round $i$ such that $g_i \notin \{T,L\}$, and $i$ if exists we have $T < g_i < L$.  Furthermore, the monotonicity property implies that $g_i \geq T$ for all $i$. 
    % In other words, $Tn$ of the total gas units are already accounted for, 
    % it remains to allocate the rest which is $(U-T)n$ in total,
    % and for each $i$ the additional gas on top of $T$ is between $0$ and $L-T$. 
    % The only monotone sequence $(g_k)$ satisfying these requirements is the one specified in the corollary statement.
    
\end{proof}

\section{The Two Target Mechanism}
\label{sec:two_target_mechanism}

In this section we consider the two-target mechanism.
Our goal is to find a simply structured optimal strategy as we did for the single-target mechanism in section~\ref{sec:single_target}.
However, as opposed to the single-target setting, here the optimal strategy might not have such a structure.  In particular the optimal strategy might be non-monotonic.
\begin{example}
    Consider an instance with parameters
%    \begin{align*}
%        n &= 4 \\
%        L &= 10 \\
%        U &= 7 \\
%        p_{\min} &= 1 \\
%        \ell_{\mathrm{start}} = h_{\mathrm{start}} &= 0 \\
%        T_\ell &= 1 \\
%        T_h &= 8 \\
%        A_\ell &= 15/8 \\
%        A_h &= 5/8 
%    \end{align*}
    \begin{align*}
        n &= 4,\quad L = 10,\quad  U = 7,\\
        p_{\min} &= 1, \quad \ell_{\mathrm{start}} = h_{\mathrm{start}} = 0,\\
        T_\ell &= 1,\quad A_\ell = 15/8,\\
        T_h &= 8,\quad A_h = 5/8.
    \end{align*}
    A brute-force search over all possible feasible strategies shows that the only optimal strategy is $(10,0,8,10)$, achieving $\approx$ 50.82 revenue, whereas the optimal non-decreasing strategy $(2,8,8,10)$ achieves $\approx$ 51.37. 
\end{example}

The example shows that the optimal strategy in the two-target setting can be very complex in general. 
However, there is some regime of the parameters in which we are in fact able to identify a simply structured strategy, namely in any instance where $L\leq T_\ell + T_h$ (see Appendix~\ref{app:sum_exceeds_limit}). 
%\bbc{TODO:  put the relevant theorem and section in an appendix if there is time, otherwise remove this paragraph.}
Since that regime is not the one we encounter in practice, we instead
% and we therefore do not attempt to characterize it.
% Instead, we 
demonstrate in this section that simply structured strategies give a good approximation to the optimal revenue in general.

To this end, fix as in section~\ref{sec:single_target} the parameters $n$, $U$, $L$, and $p_{\min}$. 
Let us also fix the two-target mechanism parameters $T_\ell,A_\ell,T_h,A_h$
as described in section~\ref{sec:two-target-preliminaries}.
We assume in this section that the initial backlogs are empty, i.e. $\ell_{\mathrm{start}} = h_{\mathrm{start}} = 0$, and that $U \geq T_\ell$.
% \bbc{Can this assumption be removed??}
% \rfc{No, it is needed to argue that the early rounds can be filled up to $T_\ell$ (since gas pays $p_{min}$ then). And it's needed to argue that the high backlog is empty during the first part meaning we can treat it like the one target case.}

\begin{definition}
Let \(\mathcal G_S\subseteq\mathcal G\) be the set of strategies of the form 
\[
\big(\
    \underbrace{T_\ell,\ldots,T_\ell,\;\; T_\ell+e_\ell}_{i_\ell\text{ rounds}},\;\;
    \underbrace{T_h,\ldots,T_h,\;\; T_h+e_h}_{i_h\text{ rounds}},\;\;
    \underbrace{L,\ldots,L}_{i_L\text{ rounds}},
\big),
\]
where $i_\ell+i_h+i_L=n$, $0\leq i_\ell, i_h, i_L\leq n$, $0\leq e_\ell<T_h-T_\ell$ and $0\leq e_h< L-T_h$.
\end{definition}

%Let \(\mathcal G_S\subseteq\mathcal G\) be the set of strategies of the form 
%\[
%\big(\,T_\ell,\ldots,T_\ell,\;\; [g_\ell],\;\; T_h,\ldots,T_h,\;\; [g_h],\;\; L,\ldots,L\,\big),
%\]
%where the bracketed rounds are optional transition rounds with $T_\ell<g_\ell<T_h$ and $T_h<g_h<L$.

In the following, we prove
$$
\min_{G_s\in \mathcal G_S} \mathrm{Rev}(G_S)
\geq
\min_{G\in\mathcal G} \mathrm{Rev}(G)
\geq
c \min_{G_S\in \mathcal G_S} \mathrm{Rev}(G_S),
$$
where \(c<1\) is a constant that depends only on the mechanism parameters. In particular, $c$ is close to 1 for parameters used in practice, meaning that the optimal revenue in $\mathcal{G}_S$ closely approximates that in $\mathcal{G}$.

While the first inequality follows from $\mathcal{G}_S\subseteq\mathcal G$, the second is proven by the following theorem.

\begin{theorem}
\label{thm:approximation}
For every \(G\in\mathcal G\), there exists \(G_S \in\mathcal G_S\) such that
$$
\mathrm{Rev}(G_S) \leq \frac{1}{c}\,\mathrm{Rev}(G),
$$
where
$$
c = \frac{\lambda L}{ \theta + e^{\lambda L - \theta}-1
},
$$

$$
\lambda = \frac1{A_\ell T_\ell}+\frac1{A_h T_h},
\qquad
\theta := \frac{1}{A_\ell}+\frac{1}{A_h}.
$$
\end{theorem}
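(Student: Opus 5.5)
The plan is to compare an arbitrary feasible $G$ with a single, explicitly chosen element $G_S\in\mathcal G_S$ that is cheap. We do not transform $G$ step by step, since the two-target optimum need not be monotone. Let $X := nU - nT_\ell$ be the gas in excess of the low target. The canonical $G_S$ fills every round with at least $T_\ell$. It places the excess as late as possible in blocks at level $L$, with transition rounds, so that price rises only at the end. Concretely, $G_S$ uses $T_\ell$ in the early rounds, then at most $T_h$, then $L$ in the last rounds, arranged so the price is $p_{\min}$ until the $h$-backlog or $\ell$-backlog starts growing. The constant $c$ has the form $\lambda L/(\theta + e^{\lambda L-\theta}-1)$. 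This suggests comparing a per-round amount of gas $L$ at exponent slope $\lambda$ per unit with its first-order version. Note that $\lambda L-\theta = \frac{L-T_\ell}{A_\ell T_\ell}+\frac{L-T_h}{A_hT_h} = \log(1+\tau)$, so $c = \lambda L/(\theta+\tau)$.

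\textbf{Step 1 (lower bound on $\mathrm{Rev}(G)$).} Write $p_i = p_{\min}\exp(\ell_i/(A_\ell T_\ell)+h_i/(A_hT_h))$. Since $\ell_{i+1}\ge \ell_i+g_i-T_\ell$ and similarly for $h$, the cumulative backlog after any round is at least the cumulative excess. The exponent $E_i$ therefore satisfies $E_{i+1}\ge E_i + \lambda g_i - \theta$. I would then use a convexity/``area'' argument. Revenue is $\sum p_i g_i$. Using $g_i \ge (E_{i+1}-E_i+\theta)/\lambda$ and $p_i = p_{\min}e^{E_i}$, each round's payment is at least $p_{\min}e^{E_i}(\Delta_i+\theta)/\lambda$, where $\Delta_i = E_{i+1}-E_i \le \lambda L - \theta$. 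The bound $e^{E_i}\Delta_i \ge \int_{E_i}^{E_{i+1}} e^x dx \cdot \frac{\Delta_i}{e^{\Delta_i}-1}$, together with $(\Delta+\theta)/(e^{\Delta}-1+\theta)$ being minimized at $\Delta=\lambda L-\theta$ (monotone decreasing), gives the per-round bound
\[
p_ig_i \;\ge\; \frac{\lambda L}{\theta+e^{\lambda L-\theta}-1}\cdot \frac{p_{\min}}{\lambda}\bigl(e^{E_{i+1}}-e^{E_i}+\theta e^{E_i}\bigr)\cdot(\text{normalization}).
\]
This is exactly where $c$ arises. Rounds with $E_{i+1}<E_i$ or with $g_i\le T_\ell$ need separate handling, e.g. bounding them by $p_{\min}g_i$ directly.

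\textbf{Step 2 (upper bound on $\mathrm{Rev}(G_S)$).} For the canonical $G_S$, compute the revenue exactly. Each gas unit is paid at the price at the start of its round, which is the ``left Riemann sum'' of $e^{E}$. Show that $\mathrm{Rev}(G_S)$ is at most the same telescoped continuous quantity $\frac{p_{\min}}{\lambda}\sum(e^{E_{i+1}}-e^{E_i}+\theta e^{E_i})$, evaluated along the $G_S$ trajectory. Then argue that the $G_S$ trajectory minimizes this continuous quantity among all feasible paths. This uses a single-target-style exchange argument in the spirit of Theorem~\ref{thm:structure}: postponing excess and keeping usage at the targets is optimal for the linear-in-gas, convex-in-exponent relaxation.

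\textbf{Main obstacle.} The hard part is Step 2's claim that the relaxed (continuous) functional is minimized by a simply structured path. The two backlogs interact: the $\ell$-backlog never drains while usage is at $T_h$, and rounds with $T_\ell<g<T_h$ raise only $\ell$. The relaxation must therefore be set up in terms of the two backlogs separately rather than a single exponent $E$. I would first try to prove that, in the relaxed problem, reordering rounds into non-decreasing order does not increase the functional. The functional is a sum of convex functions of partial sums, so a rearrangement inequality should apply. Having then obtained monotonicity, I would reduce to Corollary~\ref{cor:final}-style counting to pin down the form in $\mathcal G_S$.
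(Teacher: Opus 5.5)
The decisive problem is in Step~1: the inequality you rely on points the wrong way. From $\ell_{i+1}\ge \ell_i+g_i-T_\ell$ and $h_{i+1}\ge h_i+g_i-T_h$ you get $E_{i+1}-E_i\ge \lambda g_i-\theta$, i.e.\ $g_i\le(\Delta_i+\theta)/\lambda$. That is an \emph{upper} bound on the gas, and it is tight only when neither backlog is truncated at zero.

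Truncation is exactly the regime that matters. Take a round with $T_\ell<g_i<T_h$ and $h_i=0$; it lies in neither class you set aside. Then $\Delta_i=(g_i-T_\ell)/(A_\ell T_\ell)\ge 0$, but $(\Delta_i+\theta)/\lambda=g_i+(T_h-g_i)/(\lambda A_hT_h)$. In Example~\ref{running_example} this is essentially $T_h$, so your per-round bound asks a round using $11$ Mgas to pay as if it used about $60$.

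The set-aside rounds cannot be handled ``by $p_{\min}g_i$'' either. The potential $\Phi(G)=\frac{p_{\min}}{\lambda}\sum_i\bigl(e^{E_{i+1}}-e^{E_i}+\theta e^{E_i}\bigr)$ that Step~2 works with charges each of them $\frac{p_{\min}}{\lambda}e^{E_i}\bigl(e^{\Delta_i}-1+\theta\bigr)>0$. For a $T_\ell$-round at empty backlogs this charge is $p_{\min}\theta/\lambda$. Since $\theta/\lambda$ is a weighted harmonic mean of $T_\ell$ and $T_h$, this exceeds $p_{\min}T_\ell$, by a factor of about $6$ in the running example. Idle rounds pay nothing yet are charged a positive amount.

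Hence $\mathrm{Rev}(G)\ge c\,\Phi(G)$ is false, already for $G=G_S$ in the running example, where almost every round of $G_S$ sits at $T_\ell$ with empty backlogs. This is structural, not a missing case. Gas between $T_\ell$ and $T_h$ with an empty $h$-backlog moves the exponent at slope only $1/(A_\ell T_\ell)<\lambda$. So no functional of the exponent path alone can lower-bound revenue, and that cheap gas is precisely what simply structured strategies exploit. In Step~2 you use the same relation as an upper bound for $G_S$, where it is valid. A potential that tracks which backlog is truncated might repair Step~1, but then the minimization in Step~2 becomes essentially the original problem.

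That minimization is, as you acknowledge, unproved, and the suggested rearrangement has no footing. With empty initial backlogs each backlog is a maximum of suffix sums, so sorting the rounds non-decreasingly can only raise $E_{n+1}$. Moreover, the example preceding the theorem shows the two-target revenue minimizer can be non-monotone; monotonicity is proved only when $L\le T_\ell+T_h$.

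The paper avoids any global minimization by comparing $G$ with a $G$-dependent strategy $\overline G$. It sets $\overline{g_i}=\max\{s_i,R_i-(n-i)L\}$ with $s_i=\min\{T_h,h_i+g_i\}$, which only delays gas ($\overline{C_i}\le C_i$). Gas units are matched by arrival rank, and the identities $\ell_i=\overline{\ell_i}+z_i$ and $\overline{h_i}\le h_i-z_i$ show that the $x$-th unit of round $i$ pays at most $p_i e^{\max(0,\lambda x-\theta)}$ in $\overline G$. Averaging over $x\in[0,g_i]$ with $g_i\le L$ yields exactly the factor $1/c$. Since $\overline{h}$ stays zero on the capped prefix, that prefix is a single-target instance with limit $T_h$. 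Theorem~\ref{thm:structure}, Observation~\ref{obs:ending_price_can_only_decrease} and the zero-prefix transfer then turn $\overline G$ into an element of $\mathcal G_S$ without increasing revenue.
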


\begin{proof}
We fix \(G=(g_i)_{i=1}^n\in\mathcal G\), and begin by transforming it into a strategy $\overline{G}$ of the desired form as follows.

Before giving a formal definition, we describe the transformation intuitively.
$\overline{G}$ delays gas usage to later rounds compared to $G$, such that it uses no more than $T_h$ gas each round (except for the last rounds).
Instead it uses up to $T_h$ gas in later rounds where usage was below $T_h$ in $G$, and up to $L$ in the final rounds to accommodate any remaining gas.
See Figure \ref{fig:strategy_transform} for all illustration.
\input{strategy_transformation_figure}

Formally, the transformation is defined as follows. Let
$$
s_i:=\min\{T_h,\;h_i+g_i\}
$$
be the gas amount that is removed from the $T_h$-backlog in round $i$ of $G$, i.e. $h_{i+1}=h_i+g_i-s_i$. Define $\overline{g_i}$ round by round as
$$
\overline{g_i}:=\max\{s_i,\;R_i-(n-i)L\},
\qquad
R_i:=nU-\sum_{k<i}\overline{g_k},
$$
where $R_i$ is the gas that $\overline{G}$ still has to use in rounds $i,\ldots,n$. That is, $\overline{G}$ uses the capped amount $s_i$, unless it means that the remaining gas would not fit into the remaining $n-i$ rounds %even at usage $L$, 
in which case it uses exactly the excess.

$\overline{G}$ is a feasible strategy: Clearly $\overline{g_i}\ge s_i\ge0$.
Moreover, 
we have $\overline{g_i}\le L$ since $s_i \leq T_h \leq L$ and since $R_i\le(n-i+1)L$ --- 
the latter is true for $i=1$ since $R_1=nU\le nL$ and for $i\geq 2$ we indeed have
\[R_i=R_{i-1}-\overline{g_{i-1}}\le(n-i+1)L\] 
where the inequality holds by definition of $\overline{g_{i-1}}$.

Let us examine the structure of $\overline{G}$ in more detail.
Let $l$ be the first round in which the second expression attains the maximum; First we claim that $l$ exists:
if the maximum is not attained by the second expression in the first $n-1$ rounds, i.e.
$\overline{g_k}=s_k$ for all $k<n$, then we have %$R_n=g_n+h_n\ge s_n$. 
\begin{align*}
    R_n &= nU - \sum_{i=1}^{n-1}s_i \\
    &= \sum_{i=1}^n g_n - \sum_{i=1}^{n-1} h_i +g_i - h_{i+1} \\
    &= g_n + h_n \geq s_n
\end{align*}
implying that the maximum is attained by the second expression at round $n$.

For $i<l$ we have $\overline{g_i}=s_i\le T_h$. Moreover $R_{l+1}=(n-l)L$, so inductively $\overline{g_i}=\max\{s_i,L\}=L$ and $R_{i+1}= (n-i)L$ for $i>l$.
In particular, $R_{n+1}=0$, i.e., $\overline{G}$ uses the same total $nU$ as $G$.

Finally, $\overline{G}$ only delays gas compared to $G$:
\begin{equation}
\overline{C_i}:=\sum_{k\le i}\overline{g_k}\;\le\;C_i:=\sum_{k\le i}g_k,
\qquad 0\le i\le n,
\label{eq:cumulative_bounds}
\end{equation}
since $\overline{C_i}=\sum_{k\le i}s_k=C_i-h_{i+1}$ for $i<l$, 
%and $\overline{C_i}=nU-(n-i)L\le C_i$ for $i\ge l$.
whereas for $i\geq l$ we have
\begin{align*}
    \overline{C_i} &= \overline{g_i} + \sum_{k=1}^{i-1} \overline{g_k} \\
    &= R_i - (n-i)L + \sum_{k=1}^{i-1} \overline{g_k} \\
    &= nU - (n-i)L \\
    &\leq C_i
\end{align*}
where the inequality holds since otherwise $G$ would not have enough gas space in the final $n-i$ rounds for the remaining gas budget after fixing the usage in the first $i$ rounds.

To summarize, $\overline{G}$ consists of a first part of rounds with usages $\leq T_h$: either $[1,l-1]$ or $[1,l]$ depending on whether $\overline{g_l}>T_h$. The remaining second part consists of round usages $>T_h$ where all but at most the earliest have precisely usage $L$. This second part is already of the desired form, while the first might not be.

With backlogs initially being empty, the $T_h$-backlog remains empty during the first part, meaning the price behaves according to a mechanism with a sole target $T_\ell$.
That means this part can be transformed 
%to the desired structure 
without increasing revenue using 
%the arguments from the one-target setting,
%Corollary~\ref{cor:final}, 
Theorem~\ref{thm:structure},
where 
$T_h$ is treated as the round gas limit.  Since all steps of this transformation can only decrease the starting price of the second part of the overall strategy (see Observation~\ref{obs:ending_price_can_only_decrease}), 
we end up with a strategy $\widehat{G}$
with $\mathrm{Rev}(\widehat{G}) \leq \mathrm{Rev}(\overline G)$.
Notice that $\widehat{G}$ is of the form
\[
\widehat{G} \;=\; \big(\,\underbrace{0,\ldots,0}_{a\text{ rounds}},\;\; [u],\;\; \underbrace{T_\ell,\ldots,T_\ell}_{b\text{ rounds}},\;\; [v],\;\; \underbrace{T_h,\ldots,T_h}_{c\text{ rounds}},\;\; 
[w],\;\; \underbrace{L,\ldots,L}_{d\text{ rounds}}\,\big),
\]
where $u,v,w$ are potential single transition rounds.  Also note that if $a>0$ or $u<T_\ell$ then $\widehat{G}\notin \mathcal{G}_S$.  However, in that case we can further transform $\widehat{G}$ into a $\mathcal{G}_S$ strategy while not increasing revenue.
\begin{lemma}\label{lem:zero_prefix_transformation}
    There exists $G_{S}\in\mathcal{G}_{S}$ such that $\mathrm{Rev}(G_S) \leq \mathrm{Rev}(\widehat{G})$
\end{lemma}

\begin{proof} Starting from \(\widehat{G}\), repeat the following operation as long as there is a round in the initial below-\(T_\ell\) prefix. Let \(i\) be the rightmost round in this prefix for which \(g_i<T_\ell\). Choose a round \(j>i\) as follows: \[ j= \begin{cases} \min\{k:g_k>T_h\},& \text{if some round satisfies }g_k>T_h,\\[1mm] \min\{k:g_k>T_\ell\},& \text{otherwise}. \end{cases} \] Transfer one unit of gas from round \(j\) to round \(i\): \[ g_i\leftarrow g_i+1, \qquad g_j\leftarrow g_j-1. \] 
%This single rule subsumes the four cases described above. Indeed, 
We argue that every iteration of this operation preserves the structure of $\widehat{G}$.
To this end,
if some round exceeds \(T_h\), then $j$ is the \(w\)-round, if it exists, and otherwise the first \(L\)-round.  
In both cases, the structure is maintained after transferring the single gas unit. 
If no round exceeds \(T_h\), then $j$ is the \(v\)-round, if it exists, and otherwise the first \(T_h\)-round.
Here too, the structure is maintained after transferring the single gas unit.

%Choosing the rightmost deficient recipient and the earliest donor in the relevant region preserves the displayed block form. 
Note that each such transfer does not increase revenue, since the recipient round $i$ satisfies $p_i = p_{\min}$ (because all preceding rounds use less gas than $T_\ell$ and $\ell_{\mathrm{start}} = h_{\mathrm{start}}=0$), implying that the transferred gas unit is charged at most what it was charged before the transfer, prices between rounds $i$ and $j$ (including) do not change, and all prices after round $j$ can only decrease.

Finally, since $U > T_\ell$, we are guaranteed that the below-$T_\ell$ prefix is eliminated before the rounds $j$ as defined above cease to exist.  Consequently, the final strategy uses at least $T_\ell$ gas in every round, i.e., it belongs to $\mathcal{G}_S$ as required.

\end{proof}

What remains is to prove the revenue bound $\mathrm{Rev}(\overline G) \leq \frac1c\,\mathrm{Rev}(G)$.
To do so, we 
%regard gas as continuously divisible and 
rank gas in order of arrival. The ranking of gas in both $G$ and $\overline{G}$ allows the gas to be matched by rank between the two strategies.
The $x$-th gas unit ($0<x\leq g_i$) of round $i$ in $G$ has the cumulative rank $C_{i-1}+x$. 
Let
$$
\tau_i(x):=\min\left\{j:\overline{C_j}\geq C_{i-1}+x \right\}
$$
be the round in which this gas unit is used in \(\overline G\).
The gas unit is charged for at round $i$ with backlogs $\ell_i$ and $h_i$ under $G$, but at round $\tau_i(x)$ with backlogs $\overline{\ell_{\tau_i(x)}}$ and $\overline{h_{\tau_i(x)}}$ under $\overline{G}$. In particular, since $\overline{G}$ only delays gas compared to $G$, $\tau_i(x)\ge i$.

In order to bound the difference between the prices the unit is charged under $G$ and $\overline{G}$, we prove
%the following relationship between these backlogs:
an upper bound on the backlog values $\overline{\ell_{\tau_i(x)}}$, $\overline{h_{\tau_i(x)}}$ for every $i$ and $x$.
\begin{lemma}
\label{lem:g_hat_backlog_bounds}
For all $i$ and $0<x\leq g_i$ we have
\begin{equation}
\begin{aligned}
    \overline{\ell_{\tau_i(x)}} &\leq \ell_i, & \overline{h_{\tau_i(x)}} &\leq h_i & &\text{if } \tau_i(x)=i\\
    \overline{\ell_{\tau_i(x)}} &\leq \ell_i + x-T_\ell, & \overline{h_{\tau_i(x)}} &\leq h_i + x-T_h & &\text{if } \tau_i(x)>i
\end{aligned} %\label{eq:g_hat_backlog_bounds}
\end{equation}
\end{lemma}

\begin{proof}
Denote
$$
z_i:=C_{i-1}-\overline{C_{i-1}}=\sum_{k<i}\big(g_k-\overline{g_k}\big)
$$
denote the amount of gas that $G$ has used before round $i$ but $\overline{G}$ has not yet used. By \eqref{eq:cumulative_bounds} and $\overline{g_k}\ge s_k$,
\begin{equation}
0\le z_i\le\sum_{k<i}(g_k-s_k)=h_i.
\label{eq:z_bounds}
\end{equation}
We also note that $\ell_i\ge h_i$ for all $i$, which follows by induction from $T_\ell<T_h$.
 
We first show that the delayed gas from the first $i$ rounds accounts exactly for the difference in the slow backlogs:
\begin{claim}
\label{clm:l_identity}
For all $i$ we have
\[\ell_i=\overline{\ell_i}+z_i\]
\end{claim}

\begin{proof}
Both sides vanish for $i=1$. Assume that the claim holds for some $i$. Since $z_{i+1}=z_i+g_i-\overline{g_i}$,
$$
\ell_i+g_i-T_\ell
=
\big(\overline{\ell_i}+\overline{g_i}-T_\ell\big)+z_{i+1}.
$$
If $\overline{\ell_i}+\overline{g_i}-T_\ell\ge0$, the right-hand side is nonnegative, and taking $\max\{\,\cdot\,,0\}$ on both sides gives $\ell_{i+1}=\overline{\ell_{i+1}}+z_{i+1}$. Otherwise $\overline{\ell_{i+1}}=0$ and the right-hand side is strictly smaller than $z_{i+1}$. If $z_{i+1}>0$, this would give $\ell_{i+1}<z_{i+1}\le h_{i+1}\le \ell_{i+1}$ by \eqref{eq:z_bounds}, a contradiction. Hence $z_{i+1}=0$, the right-hand side is negative, and $\ell_{i+1}=0=\overline{\ell_{i+1}}+z_{i+1}$.
\end{proof}
 
For the fast backlogs, the delayed gas only yields an upper bound:
\begin{claim}
\label{clm:h_bound}
For all $i$ we have
$$\overline{h_i}\le h_i-z_i$$
\end{claim}

\begin{proof}
Denote
$$r_i = \sum_{k<i}\big(\overline{g_k}-s_k\big)= h_i - z_i$$
We have $r_i\ge0$ since $\overline{g_k}\ge s_k$, $r_{i+1}=r_i+\overline{g_i}-s_i$, and $\overline{h_1}=0=r_1$. If the claim holds for some $i$, then, using $s_i\le T_h$ and $r_{i+1}\ge0$,
$$
\overline{h_{i+1}}
=\max\{\overline{h_i}+\overline{g_i}-T_h,\,0\}
\le\max\{r_i+\overline{g_i}-s_i,\,0\}
=r_{i+1}.
$$
\end{proof}

Now fix a gas unit $0<x\le g_i$ of round $i$ and write $j:=\tau_i(x)$. If $j=i$, the first line 
%of \eqref{eq:g_hat_backlog_bounds}
of the lemma statement
follows directly from Claims~\ref{clm:l_identity}, \ref{clm:h_bound} and $z_i\ge0$.
 
If $j>i$, then by the definition of $\tau_i(x)$,
$$
\overline{C_k}<C_{i-1}+x\le C_k
\qquad\text{for } k=i,\ldots,j-1.
$$
Hence $z_{k+1}>0$, and thus $h_{k+1}>0$ by \eqref{eq:z_bounds}.  This in turn implies that $s_k=T_h$ and therefore $\overline{g_k}\ge s_k=T_h>T_\ell$ for $k=i,\ldots,j-1$. In these rounds neither backlog of $\overline{G}$ is truncated at zero, i.e., $\overline{\ell_{k+1}}=\overline{\ell_k}+\overline{g_k}-T_\ell$ and $\overline{h_{k+1}}=\overline{h_k}+\overline{g_k}-T_h$. %Summing over $k$ and 
% Writing $d:=j-i\ge1$, we obtain 
%from $\overline{C_{j-1}}<C_{i-1}+a$ and \eqref{eq:l_identity}
% $$
% \overline{\ell_j}
% =\overline{\ell_i}+\big(\overline{C_{j-1}}-\overline{C_{i-1}}\big)-dT_\ell
% <\overline{\ell_i}+z_i+x-dT_\ell
% =\ell_i+x-dT_\ell
% \le \ell_i+x-T_\ell,
% $$
From here we get
\begin{align*}
    \overline{\ell_j} &= \overline{\ell_j} + \left(\sum_{k=i}^{j} \overline{\ell_k}\right) -
    \left(\sum_{k=i}^{j} \overline{\ell_k}\right) \\
    &= \overline{\ell_j} +\overline{\ell_i} + \left(\sum_{k=i}^{j-1} \overline{\ell_k} +\overline{g_k} - T_\ell \right)  -
    \left(\sum_{k=i}^{j} \overline{\ell_k}\right) \\
    &= \overline{\ell_i} + \left(\overline{C_{j-1}}-\overline{C_{i-1}}\right) -(j-i)T_\ell \\
    &< \overline{\ell_i} + (z_i+x) -(j-i)T_\ell \\
    &=\ell_i +x - (j-i)T_\ell \\
    &\leq \ell_i + x - T_\ell
\end{align*}
where the first inequality holds since $\overline{C_{j-1}}<C_{i-1}+a$ and by definition of $z_i$, and the last equality holds by Claim~\ref{clm:l_identity}.
Analogous reasoning for the fast backlogs (using Claim~\ref{clm:h_bound}) obtains
% $$
% \overline{h_j}
% <\overline{h_i}+z_i+x-dT_h
% \le h_i+x-dT_h
% \le h_i+x-T_h.
% $$
\[
\overline{h_j} < h_i +x - T_h.
\]
This proves %\eqref{eq:g_hat_backlog_bounds}.
the lemma.
\end{proof}

Now we can use the backlog bounds in %\eqref{eq:g_hat_backlog_bounds} 
Lemma~\ref{lem:g_hat_backlog_bounds}
to bound the price at which the $x$-th gas unit in round $i$ of $G$ is processed in $\overline{G}$:
\begin{align}
    \overline{p_{\tau_i(x)}} &= p_{\min} \exp\left( \frac{\overline{\ell_{\tau_i(x)}}}{A_\ell T_\ell}+\frac{\overline{h_{\tau_i(x)}}}{A_h T_h} \right) \nonumber \\
    &\leq  p_{\min} \exp\left( \frac{\ell_i}{A_\ell T_\ell}+\frac{h_i}{A_h T_h} + \max(0, \lambda x-\theta) \right) \nonumber \\
    &= p_i e^{\max(0, \lambda x-\theta)}
\label{eq:g_hat_price_bound}
\end{align}
Note that already at this point we could use the fact that $x\leq L$ to derive the bound 
\[\mathrm{Rev}(\overline G) \leq  e^{\lambda L-\theta}\cdot \mathrm{Rev}(G).\]
%\bbc{Note that the exponent here is exactly $\frac{L-T_l}{A_l T_l} + \frac{L-T_h}{A_h T_h}$  we usually constrain to be $\ln(1.2)$!!!  This would correspond to $c\approx 83\%$}
%\rfc{better bound below}
%
However, we can get a better bound with the following approach.
Consider the average price that gas originally arriving in round $i$ of $G$ pays in $\overline{G}$:
$$
\frac{1}{g_i} \int_0^{g_i} \overline{p_{\tau_i(x)}} \,dx.
$$
Using \eqref{eq:g_hat_price_bound}, this can be upper bounded by
$$
 p_i \frac{1}{g_i} \int_0^{g_i} e^{\max(0, \lambda x-\theta)}\,dx.
$$
To show that the new average price is at most $\frac{1}{c}$ times the original price, it remains to prove that
$$
\frac{1}{g_i} \int_0^{g_i} e^{\max(0,\lambda x-\theta)}\,dx \leq \frac{1}{c}.
$$
The left-hand side (LHS) is the average of the integrand over $[0, g_i]$. Since the integrand is non-decreasing, the LHS is non-decreasing in $g_i$.
Because $g_i\leq L$ the largest possible value of LHS is attained for $g_i = L$. For $g_i = L$ the integral evaluates to
$$
\dfrac{\theta}{\lambda} + \dfrac{e^{\lambda L - \theta}-1}{\lambda},
$$
which proves the claim.

\end{proof}

Note that $e^{\lambda L - \theta}$ is the factor by which the price increases when the largest possible amount of gas $L$ is used, since the exponent equals $\frac{L-T_l}{A_l T_l} + \frac{L-T_h}{A_h T_h}$. Therefore this expression is the maximum factor $1+\tau$ by which the price can increase from one round to the next.
In practice, this value is often fixed to ensure good user experience, as described in Section~\ref{sec:introduction}.
%\footnote{A fixed maximum price increase allows users - or their wallets - to set a reasonable maximum gas price such that transactions get reliably included on-chain.}

Assume the maximal increase is set to $\tau>0$, i.e. $\lambda L - \theta = \ln (1+\tau)$ and consider how $1/c$ depends on $\tau$:
$$
\frac{1}{c} = \frac{ \theta + e^{\lambda L - \theta}-1}{\lambda L} = 1+\frac{\tau-\ln (1+\tau)}{\lambda L}
$$
Since $\lambda L\geq \ln(1+\tau)$, we observe
$$
\frac{1}{c} \leq \frac{\tau}{\ln(1+\tau)} < 1 + \frac{\tau}{2}.
$$
So for any gas price mechanism where the maximum increase is set to $\tau$, the previous theorem provides a $\frac{\tau}{2}$-approximation of the optimal revenue.

For our running example (Example~\ref{running_example}) of practical parameters ($L = 192~\mathrm{Mgas/s}$, $T_\ell = 10~\mathrm{Mgas/s}$, $A_\ell = 86400$, $T_h = 60~\mathrm{Mgas/s}$, $A_h = 12$), the theorem's constant is $c \approx 93.7\%$.
So it guarantees that the overall optimal revenue is within $6.3\%$ fo the revenue of the optimal simply-structured strategy.
%Note that this is just an upper bound on the difference. For the parameters in question, the actual difference is much smaller. \rfc{maybe mention a number?}
%\bbc{Not sure I would go into this, as you'd presumably need to explain how you found out the revenue of the optimal strategy.}

\section{Computing the Optimal Simply-Structured Strategy Revenue}
\label{sec:computing_optimal_revenue}

The previous sections prove that the revenue of the optimal strategy is equal to or close to the revenue of the optimal simply-structured strategy.
In the following, we discuss how to compute the optimal simply-structured strategy revenue in practice.

\subsection{Single Target}
\label{sec:single_target_computing_optimal_revenue}
Assuming that the backlog is initially empty (equivalently, the initial price satisfies $p_1 = p_{\min}$) and that $U\geq T$, Corollary \ref{cor:final} shows that there is an optimal strategy $G^*$ of the form
\[
    G^* \;=\; \big(\,\underbrace{T,\ldots,T,\; T+r}_{n-q\text{ rounds}},\; \underbrace{L,\ldots,L}_{q\text{ rounds}}\,\big)
\]
where
\[
    q = \left\lfloor \frac{n(U-T)}{L-T} \right\rfloor, \qquad r = (U-T)n - (L-T)q.
\]

With this strategy, the price remains at $p_{\min}$ during the first $n-q$ rounds. In the next round, the price is $p_{min}\exp\left(\frac{r}{AT}\right)$, and increases by a factor of $\exp\left(\frac{L-T}{AT}\right)$ in each following round.
Hence, adding up the round revenues leads to a geometric sum which simplifies to
\[
    \mathrm{Rev}(G^*) = p_{min}\left( (n-q)T+r + L\exp\left(\frac{r}{AT}\right)\frac{\exp\left(\frac{q(L-T)}{AT}\right)-1}{\exp\left(\frac{L-T}{AT}\right)-1} \right).
\]

\subsection{Two Targets}
Assume both backlogs are initially empty and $U\geq T_\ell$. In Section \ref{sec:two_target_mechanism} we proved that the smallest possible revenue is close to the smallest revenue of a strategy of the form
\[
\big(\
    \underbrace{T_\ell,\ldots,T_\ell,\;\; T_\ell+e_\ell}_{i_\ell\text{ rounds}},\;\;
    \underbrace{T_h,\ldots,T_h,\;\; T_h+e_h}_{i_h\text{ rounds}},\;\;
    \underbrace{L,\ldots,L}_{i_L\text{ rounds}},
\big),
\]
where $i_\ell+i_h+i_L=n$, $0\leq i_\ell, i_h, i_L\leq n$, $0\leq e_\ell<T_h-T_\ell$ and $0\leq e_h< L-T_h$.

The revenue of such a strategy can be written as a closed form expression in these parameters similarly to the single target strategy.

Note that due to the total-gas constraint we have
\[
    e_\ell + i_h(T_h-T_\ell) + e_h + i_L(L-T_\ell) = n(U-T_\ell).
\]
In particular, given values for $e_h$ and $i_L$, the remaining variables $e_\ell$, $i_h$ and $i_\ell$ are uniquely determined: the total-gas constraint can be solved for $e_\ell$ and $i_h$, since $0\leq e_\ell<T_h-T_\ell$, and then $i_\ell$ can be inferred from $i_\ell+i_h+i_L=n$.
Thus the minimal revenue attainable by strategies of the form above can be found by trying the $n \times (L - T_h)$ possible combinations for $i_L$ and $e_h$.

In practice however, this can be sped up significantly:

\paragraph{Reducing the search space for $\bm{e_h}$}
The search space of $e_h$ can be reduced by considering a grid of possible values -- for instance, multiples of $b = 0.001\min(T_h, L-T_h)$.
The loss in precision can be bounded as follows:

Consider an arbitrary simply-structured strategy $G\in \mathcal{G_S}$.
Transform it to $G'\in \mathcal{G_S}$ by rounding $e_h$ up to the next multiple of $b$, and reducing $e_\ell$ by the same amount (if $e_\ell$ is smaller than the rounded amount, subtract the remaining amount from round $i_\ell+1$).
The latter can only decreases the revenue of $G'$ compared to $G$.
The increase of $e_h$ changes the revenue in that round by a factor of at most 1.001 since $b\leq 0.001T_h$.
Moreover, since both backlogs in all $L$-rounds increase by at most $b$ when compared to $G$, this implies that the prices in all $L$-rounds increase by a factor of at most
$$\exp\left({b\left(\frac{1}{A_lT_l}+\frac{1}{A_hT_h}\right)}\right) \leq \exp\left( \frac{b}{L-T_h} \left( \frac{L-T_l}{A_l T_l} + \frac{L-T_h}{A_h T_h} \right) \right) = (1+\tau)^{ \frac{b}{L-T_h}} \leq (1+\tau)^{0.001}.$$
In particular, $(1+\tau)^{0.001} < 1.001$ for $\tau$ in practice ($\tau < 1.7$). 
Together, this means $\mathrm{Rev}(G') \leq 1.001 \mathrm{Rev}(G)$.
So the minimum revenue among the set of rounded strategies is at most $0.1\%$ larger than the minimum of $\mathcal{G_S}$.
Hence trying the $1000 \frac{L-T_h}{\min(T_h, L-T_h)}$ multiples of $b$ for $e_h$ yields an $0.1\%$ approximation of the actual minimum.

\paragraph{Reducing the search space for $\bm{i_L}$}
Because $A_h < A_\ell$ (and in practice, typically, even $A_h\ll A_\ell$) usage above $T_h$ leads to a (significantly) stronger price increase in the following rounds than usage below $T_h$.
In particular, when $i_L$ is large, the gas price can quickly increase during the rounds with usage $L$. 
So for configurations with large $i_L$, the cost of the block of $L$-rounds alone can be high enough to rule out the optimality of the configuration.
This can be used to compute a simple upper bound on $i_L$ as follows:

First, calculate the revenue $\overline{R}$ of the strategy with $i_L=0$ and $e_h=0$ as a baseline (using the method described in section~\ref{sec:single_target_computing_optimal_revenue}).
Any configuration in which the revenue of the $L$-rounds alone is larger than this can be disregarded. 
The revenue obtained from the $L$-rounds alone, assuming conservatively that both backlogs are 0 when the $L$-rounds begin, is given by

\[
    R_L(i_L) := p_{min}L\frac{(1+\tau)^{i_L}-1}{\tau},\qquad\text{where } 1+\tau = \exp\left( \frac{L-T_l}{A_l T_l} + \frac{L-T_h}{A_h T_h} \right).
\]
Then $R_L(i_L)\leq \overline{R}$ implies
\[
    i_L \leq \frac{\log\left(1+ \frac{\tau}{L} \frac{\overline{R}}{p_{min}} \right)}{\log (1+\tau)}.
\]
For our running example, this means $i_L \leq 58$, which massively reduces the number of values to be checked.

\section{Setting Pricing Parameters}
\label{sec:setting_parameters}

We now discuss how the results of the previous sections can be applied in practice to choose the parameters of the two target mechanism.
In particular, they provide a method for choosing the low target $T_\ell$ and its adjustment rate $A_\ell$ to satisfy the economic security guaranty as described in Section~\ref{sec:introduction}.

We take the gas limit $L$ as given and assume the minimum price $p_{\min}$, the high target $T_h$, and its adjustment window $A_h$ to be determined by other design requirements. For instance, $T_h$ and $A_h$ can be chosen to control gas usage on the chain over a short period of time (seconds to minutes) to ensure that nodes can keep up with processing.
This can be done by setting $T_h$ and $A_h$ directly or by choosing  $T_h$ and the maximum per-round price increase $\tau$.
Then $A_h$ can be inferred once all other parameters are set from $\frac{L-T_\ell}{A_\ell T_\ell} + \frac{L-T_h}{A_h T_h} = \ln (1+\tau)$.

We now want to set a long-term constraint, i.e., $T_\ell$ and $A_\ell$, to limit the average gas usage over a longer period of time (days to weeks). In particular, this is commonly done to limit the amount of state growth.
Suppose that state growth constraints dictate that the average gas usage should not exceed $U$ over a certain period of time ($n$ seconds).
To enforce this, we demand that the adversary's budget necessary to incur a higher average usage than $U$ on the chain is at least $B$ -- and want to choose $T_\ell$ and $A_\ell$ accordingly.

For any choice of $T_\ell$ and $A_\ell$, the previous sections provide an efficient method of calculating the cost $\mathrm{Rev}(T_\ell, A_\ell)$ of an (adversarially) optimal gas usage, for which we require that $\mathrm{Rev}(T_\ell, A_\ell) \geq B$.
For given $T$ in the relevant range  $T_\ell \in [0, U]$, we can use binary search to find the appropriate $A_\ell$ that solves the budget constraint equation.
This produces a $(T_\ell, A_\ell)$-curve which summarizes the available design tradeoff.
Picking a larger $T_\ell$ allows for a higher long-term average chain throughput.
On the other hand, a larger $T_\ell$ demands $A_\ell$ to be smaller, which makes the mechanism produce more aggressive price increases.

Figure \ref{fig:parameter_curve_budget} shows the $(T_\ell, A_\ell)$-curve for our running example (see Example~\ref{running_example}): $L = 192M, T_h = 60M, \tau = 20\%, p_{\min}=0.02\ \mathrm{gwei}, n=2{,}592{,}000~\mathrm{s}$ (30 days). The plot shows the curve for a few choices of safe fee revenue amounts $B$, namely $4K$, $40K$ and $400K$ ETH, which roughly translate to $\$10M$, $\$100M$ and $\$1B$, respectively.

\begin{figure}
    \centering
    \includegraphics[width=0.8\linewidth]{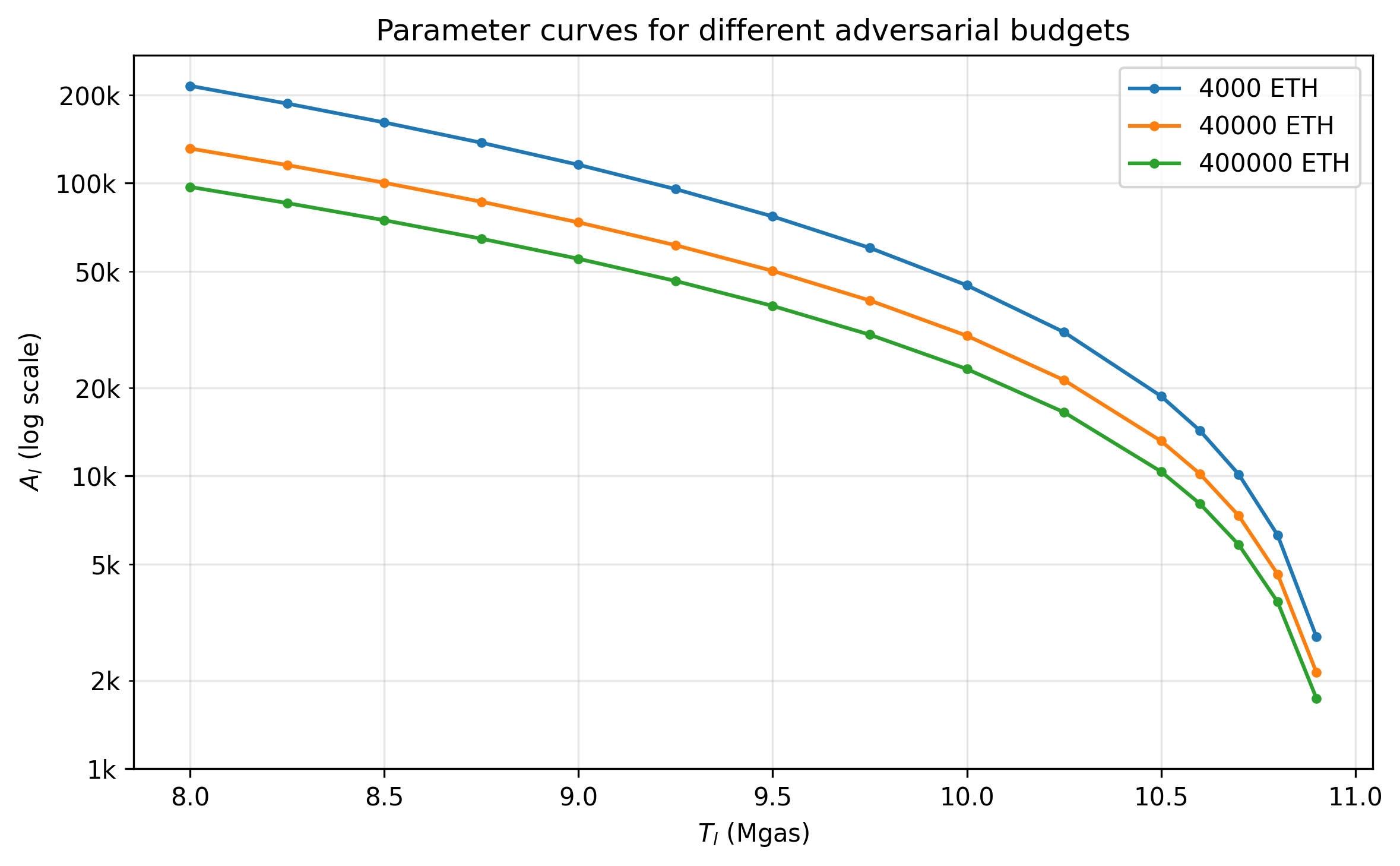}
    \caption{Possible $(T_l, A_l)$ parameter choices for $U=11 Mgas$, $T_h=60Mgas$, $\tau = 20\%$.}
    \label{fig:parameter_curve_budget}
\end{figure}

\section{Conclusion}

In this work, we presented a framework for selecting economically secure parameters for exponential EIP-1559-style pricing mechanisms. The parameters are chosen to guarantee a lower bound on the total fees needed to sustain a given average gas usage over a fixed time window. This lets the mechanism designer make it prohibitively expensive to violate a usage constraint, without making any assumptions about gas demand.

For the single-target mechanism, we characterized an exact revenue-minimizing gas distribution. For the two-target mechanism with initially empty backlogs, we identified a simply structured distribution whose revenue provably approximates the minimum, with an approximation guarantee controlled by the maximum per-round relative price increase. These structural results make the relevant fee bounds efficiently computable and provide a basis for selecting gas targets and adjustment rates to meet a prescribed economic security threshold.

Our analysis makes explicit the tradeoff between higher throughput targets and more gradual price adjustments. Moving the target farther below the desired usage rate creates slack that permits a less aggressive adjustment rate while still preserving the economic security guarantee. In the two-target setting, this supports calibrating long-term constraints, such as state growth, alongside short-term requirements for guaranteeing nodes can keep up with the chain. 

The resulting framework provides a theoretical basis for parameter selection in existing mechanisms and for the design of future variants.

\paragraph{AI Disclaimer}
The single-target analysis was developed entirely by the authors.
The two-target approximation result (Theorem~\ref{thm:approximation}) was developed with assistance from OpenAI's \emph{GPT-5.6 Sol} model:
The authors proposed the structure of the approximate strategies, and ChatGPT derived an initial bound.
The authors subsequently identified several improvements that strengthened the bound, which ChatGPT incorporated into the proof.
Finally, the authors refined and rewrote the proof for clarity.
The authors take full responsibility for the content and correctness of the manuscript.

%The two-target approximation result (Theorem~\ref{thm:approximation}) was developed through conversations with ChatGPT: the authors posed the question of whether an approximation was possible, and ChatGPT provided an initial proof draft, which the authors subsequently refined and rewrote for clarity. The authors take full responsibility for the content and correctness of the manuscript.

\bibliographystyle{plainurl}
\bibliography{references}

\appendix

\section{The Two Target Mechanism when the Sum of Targets Exceeds the Limit}
\label{app:sum_exceeds_limit}

Fix the parameters $n$, $U$, $L$, and $p_{\min}$, and let us also fix the two-target mechanism parameters $T_\ell,A_\ell,T_h,A_h$
as described in section~\ref{sec:two-target-preliminaries}.
In this section we further assume that $L \leq T_\ell + T_h$.
Our goal in this section is to prove
\begin{restatable}{theorem}{sumExceedsLimit}
\label{thm:sum_exceeds_limit}
If $L \leq T_\ell + T_h$, then there exists an optimal strategy $G$ satisfying:
\begin{enumerate}
    \item $G$ is monotone non-decreasing.
    \item There is at most one round $i$ for which $0 < g_i < T_\ell$. 
    \item There is at most one round $i$ for which $T_\ell < g_i < T_h$.
    \item There is at most one round $i$ for which $T_h < g_i < L$.
\end{enumerate}   
\end{restatable}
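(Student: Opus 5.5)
I will follow the same scheme as in the proof of Theorem~\ref{thm:structure}. Start from an optimal strategy $G$ and argue bullet by bullet. Whenever a bullet fails but the earlier ones hold, I exhibit a $j$-shifted pair $(G,G')$ with $\mathrm{Rev}(G')\le\mathrm{Rev}(G)$. Since $G'\succ G$ and there are finitely many strategies, iterating terminates in an optimal strategy satisfying all four bullets.

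The basic tool is a two-target analogue of Lemma~\ref{lem:rev G - rev G' cancel out}. Write $p_{j+2}$ in terms of $(\ell_j,h_j)$, $g_j$ and $g_{j+1}$. The two backlogs are clipped independently, so I expect $\ell_{j+2}$ to depend only on $g_j+g_{j+1}$ as soon as $g_j\ge T_\ell$ or $g_{j+1}\le T_\ell$. Likewise $h_{j+2}$ should depend only on the sum as soon as $g_j\ge T_h$ or $g_{j+1}\le T_h$. The lemma then says that if these conditions hold for both $G$ and $G'$, the revenue difference is confined to rounds $j,j+1$.

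With this lemma the three single-target propositions have analogues within each band $[0,T_\ell]$, $[T_\ell,T_h]$, $[T_h,L]$:
\begin{enumerate}
\item In the lowest band, moving one unit forward (Proposition~\ref{prop:g_j_leq_T_g_j+1+leq_T-1}) works because $p'_{j+1}\le p_{j+1}\le p_j$.
\item In the top band, Propositions~\ref{prop:g_j + g_j+1 - T leq L} and~\ref{prop:g_j + g_j+1 - T > L} carry over with $T$ replaced by $T_h$. The concavity argument goes through because the price exponent is linear in the gas once both backlogs are unclipped.
\end{enumerate}
The genuinely new case is the middle band, with $T_\ell< g_j$ and $g_{j+1}<T_h$. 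This is where the hypothesis $L\le T_\ell+T_h$ must enter. I would first try to push gas to $g'_j=T_\ell$ or $g'_{j+1}=T_h$, whichever comes first.

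The hypothesis matters as follows. The counterexample $(10,0,8,10)$ uses a round at $L$ followed by a round at $0$, exploiting that $h$ drains faster than it fills. If $L\le T_\ell+T_h$, then $g_j-T_h\le L-T_h\le T_\ell$. So whenever $g_j>T_h$ raises $h$, merging the surplus into round $j+1$ keeps $g'_{j+1}=g_{j+1}+g_j-T_h\le L$ in the relevant cases. It should also keep the pair within the conditions of the lemma, so the bad inversion can always be repaired.

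For bullet~1 (monotonicity) I would case split on $g_j>g_{j+1}$ according to which bands $g_j$ and $g_{j+1}$ lie in. In each case I move gas forward, either one unit or up to the next band boundary, as in the proof of Theorem~\ref{thm:structure}. The revenue difference then reduces to an expression of the form $g_{j+1}(p_{j+1}-p'_{j+1})+(\text{moved units})(p'_{j+1}-p_j)$, or a concave function of $g_{j+1}$ checked at its endpoints as in Proposition~\ref{prop:g_j + g_j+1 - T > L}.

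Bullets~2--4 then follow as in Theorem~\ref{thm:structure}: monotonicity gives two adjacent rounds inside the same open band, and the within-band shift merges them.

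The main obstacle is the cross-band case, where $g_j>T_h$ while $g_{j+1}<T_\ell$ or lies in the middle band. There the shift changes which backlog is clipped, and the lemma's conditions can fail, particularly when $h_j=0$ and $\ell_j>0$. My plan is to shift exactly $g_j-T_h$ units, so that $g'_j=T_h$ and $h'_{j+1}=h_j$. The $\ell$-backlog then sees the same two-round sum, provided $g'_{j+1}\ge T_\ell$ or $\ell$ is unclipped. This is where I expect to need $L\le T_\ell+T_h$, together with a small argument using integrality of backlogs, as in bullet~4 of Theorem~\ref{thm:structure}, to rule out clipping.
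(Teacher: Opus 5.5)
Your overall scheme (iterated $j$-shifts that do not increase revenue, terminating via $\succ$) and your two-target cancellation lemma match the paper. However, the transformation you commit to in the case you flag as the main obstacle, $g_j>T_h>g_{j+1}$, fails. For $g'_j=T_h$, $g'_{j+1}=g_{j+1}+g_j-T_h$ the cancellation conditions hold trivially, since both $g_j$ and $g'_j$ are at least $T_h>T_\ell$. So clipping and integrality are not the issue; the revenue comparison is. Write $M(g)$ for $p_{j+1}/p_j$ as a function of the round-$j$ usage. Then
\[
\frac{\mathrm{Rev}(G)-\mathrm{Rev}(G')}{p_j}=g_{j+1}\bigl(M(g_j)-M(T_h)\bigr)-(g_j-T_h)\bigl(M(T_h)-1\bigr),
\qquad
M(T_h)=\exp\Bigl(\frac{T_h-T_\ell}{A_\ell T_\ell}\Bigr)>1,
\]
because using $T_h$ gas still raises the $\ell$-backlog. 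For $g_{j+1}=0$ (e.g.\ replacing $(L,0)$ by $(T_h,L-T_h)$) this is strictly negative. Your $G'$ is then strictly more expensive and the iteration breaks.

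The same fact, $p'_{j+1}=p_jM(T_h)\neq p_j$, means the single-target shifts to $g'_j=T$ and to $g'_{j+1}=L$ do not carry over verbatim to the top band. The first needs $g_{j+1}\ge T_h$ together with an estimate such as $T_h\lambda\ge 1-1/M(T_h)$, where $\lambda=\frac{1}{A_\ell T_\ell}+\frac{1}{A_hT_h}$. In the second, the endpoint value of your concave function is no longer trivially non-negative. Your generic difference formula also has a sign slip: with $k$ units moved it is $k(p_j-p'_{j+1})+g_{j+1}(p_{j+1}-p'_{j+1})$. The first term is negative as soon as $M(g'_j)>1$ (always when $g'_j\ge T_h$), and that is exactly the new difficulty of the two-target setting.

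What works is your own ``whichever comes first'' rule, applied also when $g_j>T_h$.
If $g_j+g_{j+1}-T_\ell\le T_h$, take $g'_j=T_\ell$. The difference divided by $p_j$ is $(g_j-T_\ell)(1-M(T_\ell))+g_{j+1}(M(g_j)-M(T_\ell))\ge 0$, because $M(T_\ell)\le 1$.
Otherwise take $g'_{j+1}=T_h$ and $g'_j=g_j+g_{j+1}-T_h>T_\ell$. The difference divided by $p_j$ is $g_{j+1}(M(g_j)-1)-T_h(M(g'_j)-1)$. Since $z\mapsto(e^z-1)/z$ is increasing, it suffices to prove $T_h\ln M(g'_j)\le g_{j+1}\ln M(g_j)$ when $M(g'_j)>1$; otherwise the claim is trivial. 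Splitting the exponent into its $\ell$- and $h$-parts, this reduces to $(T_h-g_{j+1})(T_\ell+T_h-g_j)\ge 0$ and $(T_h-g_{j+1})(2T_h-g_j)\ge 0$.

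This is where $L\le T_\ell+T_h$ is really used. It is not needed for feasibility $g'_{j+1}\le L$, which holds anyway when $g_{j+1}\le T_h$. For same-band inversions the paper simply swaps $g_j$ and $g_{j+1}$, using that $g\mapsto(M(g)-1)/g$ is non-decreasing (convexity of $\ln M$ and $M(0)\le M(T_\ell)\le 1$). The same monotonicity, combined with the parabola bounds $T_hg'_j\le g_jg_{j+1}$ and $Lg'_j\le g_jg_{j+1}$, handles the merging steps for bullets 3 and 4. The estimate above covers the remaining subcase $g'_j=T_h$.
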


We start with The following generalization of Lemma~\ref{lem:rev G - rev G' cancel out}.

\begin{lemma}
\label{lem:rev G - rev G' cancel out generalized}
    Let $(G,G')$ be a $j$-shifted strategy pair. Suppose that for each $d \in \{\ell,h\}$ the following two conditions hold:
    \begin{enumerate}
        \item $g_j \geq T_d$ or $g_{j+1} \leq T_d$
        \item $g'_j \geq T_d$ or $g'_{j+1} \leq T_d$
    \end{enumerate}
    Then $p_{j+2} = p'_{j+2}$, and therefore the difference in revenue between $G$ and $G'$ is determined only by rounds $j$ and $j+1$, i.e.
    \[
    \mathrm{Rev}(G) - \mathrm{Rev}(G') = 
    p_j g_j + p_{j+1} g_{j+1} - \left(p_j g'_j + p'_{j+1} g'_{j+1}\right)
    \]
\end{lemma}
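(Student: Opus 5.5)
The plan is to mirror the proof of Lemma~\ref{lem:rev G - rev G' cancel out}, but to work with the two backlogs separately rather than with prices directly. The price is $p_{\min}\exp(\ell_i/(A_\ell T_\ell)+h_i/(A_hT_h))$, so $p_{j+2}=p'_{j+2}$ follows once we show $\ell_{j+2}=\ell'_{j+2}$ and $h_{j+2}=h'_{j+2}$. Both backlogs follow the same one-dimensional recursion $b_{i+1}=[b_i+g_i-T_d]_+$, with $d\in\{\ell,h\}$. It therefore suffices to prove a backlog version of the single-target argument for a generic target $T_d$, and then apply it once for each $d$, using the pair of conditions stated for that $d$.

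First, since $G$ and $G'$ agree before round $j$, we have $\ell_j=\ell'_j$ and $h_j=h'_j$, hence $p_j=p'_j$. Fix $d$ and write $b$ for the corresponding backlog. Unrolling two steps of the recursion gives
\[
b_{j+2}=\max\{b_j+g_j+g_{j+1}-2T_d,\; g_{j+1}-T_d,\; 0\},
\]
which is the additive analogue of \eqref{eq:p_j+2_max3}. The next step is to show that the middle term is redundant under Condition~1 for $d$:
\begin{itemize}
\item If $g_j\ge T_d$, then $b_j\ge 0$ makes the first term at least the second.
\item If $g_{j+1}\le T_d$, then the second term is at most $0$.
\end{itemize}
In either case $b_{j+2}=\max\{b_j+g_j+g_{j+1}-2T_d,0\}$. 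Condition~2 gives the same identity for $b'_{j+2}$. Since $b_j=b'_j$ and $g_j+g_{j+1}=g'_j+g'_{j+1}$, we conclude $b_{j+2}=b'_{j+2}$.

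Applying this for $d=\ell$ and $d=h$ gives $p_{j+2}=p'_{j+2}$. Because all later usages coincide, the two backlog trajectories agree from round $j+2$ on, so all prices agree from round $j+2$ on. The revenue difference then reduces to rounds $j$ and $j+1$, and using $p_j=p'_j$ it equals the displayed expression.

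No step here should be a real obstacle. The one point to get right is to argue at the level of backlogs, where the two targets decouple additively. The truncations at zero act on each backlog separately, so the price itself does not satisfy a single max-formula and cannot be handled as directly as in Lemma~\ref{lem:rev G - rev G' cancel out}.
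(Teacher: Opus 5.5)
Your proof is correct and is essentially the paper's argument. The paper also decouples the two targets: it writes $p_i$ as the product of $p_{\min}\exp(\ell_i/(A_\ell T_\ell))$ and $\exp(h_i/(A_hT_h))$ and applies Lemma~\ref{lem:rev G - rev G' cancel out} to each factor, while you redo the same two-step unrolling additively on the backlogs. This factorization also shows your closing remark is slightly overstated: each factor does satisfy the single-target max-formula (with floor $1$ for the $h$-factor), so Lemma~\ref{lem:rev G - rev G' cancel out} applies directly.
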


\begin{proof}
    From applying Lemma~\ref{lem:rev G - rev G' cancel out} to both price trajectories 
    %$r$ and $q$,
    $p_{\ell,\,i} := p_{\min}\exp\left(\frac{\ell_i}{A_\ell T_\ell}\right), p_{h,\,i} := \exp\left(\frac{h_i}{A_h T_h}\right)$
    it follows that 
    %$r_{j+2} = r'_{j+2}$
    $p_{\ell,\, {j+2}} = p'_{\ell,\, {j+2}}$
    and 
    %$q_{j+2} = q'_{j+2}$
    $p_{h,\, {j+2}} = p'_{h,\, {j+2}}$
    which proves $p_{j+2} = p'_{j+2}$.
    
\end{proof}

Given 
%$r_i,q_i,g_i$
$p_{\ell,\, i}, p_{h,\, i}, g_i$
we define the relative price increase function
% \[
% M(r_i,q_i,g_i) := \frac{p_{i+1}}{p_i} = 
% \frac{r_{i+1}q_{i+1}}{r_i q_i}
% \]
\[
M(p_{\ell,\, i}, p_{h,\, i}, g_i) := \frac{p_{i+1}}{p_i} = 
\frac{p_{\ell,\, i+1}\cdot p_{h, \,i+1}}{p_{\ell, \,i}\cdot p_{h, \,i}}
\]
% \rfc{TODO: add some explanation to this definition since the use of non-arguments is confusing}
where we note that $p_{\ell,\, i+1}$ and $p_{h, \,i+1}$ are determined from $p_{\ell,\, i}, p_{h, \,i}$ and $g_i$.
We will sometimes abuse notation and write $M(g_i)$ - which assumes the starting price state
%$r_i$ and $q_i$
$p_{\ell, i}$ and $p_{h, i}$
are given from the context.
\begin{lemma}
\label{lem:M_properties}
    Fix pricing state 
    %$r_i, q_i$.
    $p_{\ell, \,i}$ and $p_{h, \,i}$.
    Then we have:
    \begin{enumerate}
        \item $M$ is non-decreasing in $g_i$
        \item $M(T_\ell) \leq 1$.
        \item The function $g \mapsto \frac{M(g) - 1}{g}$ is non-decreasing.
    \end{enumerate}
\end{lemma}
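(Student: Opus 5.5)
The plan is to write $M$ explicitly as a product of two one-dimensional factors and verify each property factor by factor. Fix the state $\ell_i, h_i$ (equivalently $p_{\ell,i}, p_{h,i}$) and write
\[
M(g) = \exp\left(\frac{[\ell_i+g-T_\ell]_+-\ell_i}{A_\ell T_\ell}\right)\cdot\exp\left(\frac{[h_i+g-T_h]_+-h_i}{A_h T_h}\right) = \exp\bigl(\phi_\ell(g)+\phi_h(g)\bigr),
\]
where $\phi_d(g) := \frac{\max\{g-T_d,\,-b_d\}}{A_dT_d}$ with $b_\ell=\ell_i$, $b_h=h_i$. Each $\phi_d$ is a non-decreasing, convex, piecewise-linear function of $g$: it equals the constant $-b_d/(A_dT_d)$ for $g\le T_d-b_d$, and is linear with slope $1/(A_dT_d)$ afterwards. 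Hence $\psi:=\phi_\ell+\phi_h$ is non-decreasing and convex, and $M=e^{\psi}$ is non-decreasing and convex in $g$. This immediately gives bullet~1.

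For bullet~2, evaluate at $g=T_\ell$. We have $\phi_\ell(T_\ell)=\max\{0,-\ell_i\}/(A_\ell T_\ell)=0$. Since $T_\ell<T_h$ and $h_i\ge 0$, we also get $\phi_h(T_\ell)=\max\{T_\ell-T_h,-h_i\}/(A_hT_h)\le 0$. So $\psi(T_\ell)\le 0$ and $M(T_\ell)\le 1$.

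Bullet~3 is the only step needing care, and I expect it to be the main obstacle. The tool is the standard fact that for a convex function $f$, the slope $\frac{f(g)-f(0)}{g}$ is non-decreasing in $g>0$. Applying this to $f=M$ gives that $\frac{M(g)-M(0)}{g}$ is non-decreasing. However, the statement has $1$ in place of $M(0)$, so the two match only when $M(0)=1$. Since $\psi(0)=-\min\{\ell_i,T_\ell\}/(A_\ell T_\ell)-\min\{h_i,T_h\}/(A_hT_h)\le 0$, we have $M(0)\le 1$ in general. Write
\[
\frac{M(g)-1}{g}=\frac{M(g)-M(0)}{g}+\frac{M(0)-1}{g}.
\]
The first term is non-decreasing. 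The second term is $-(1-M(0))/g$ with $1-M(0)\ge 0$, and $-c/g$ is non-decreasing in $g>0$ for any $c\ge 0$. The sum is therefore non-decreasing on $g>0$, which proves bullet~3. Equivalently, $(M(g)-1)/g$ is the slope of the chord from the point $(0,1)$, which lies on or above the convex graph of $M$ at $0$, and such chord slopes are monotone.

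To keep the argument rigorous over integer $g$, I will note that $M$ is defined for all real $g\in[0,L]$, so convexity and the chord-slope monotonicity hold on reals and restrict to integers. I will also remark that convexity of $M$ follows because the composition of the convex non-decreasing function $\exp$ with the convex function $\psi$ is convex.
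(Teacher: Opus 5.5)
Your proof is correct and follows essentially the same route as the paper: convexity of $\ln M$ (hence of $M$), the bound $M(0)\le 1$, and the chord-slope property of convex functions. The only differences are cosmetic: you bound $M(0)\le 1$ directly rather than via $M(0)\le M(T_\ell)\le 1$, and you split off the term $-(1-M(0))/g$ instead of applying the convex-combination inequality between $0$ and $g_2$.
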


\begin{proof}
    The first bullet is trivial. The second bullet holds since a gas usage of $g_i = T_\ell$ leaves the $\ell$ backlog the same, whereas the $h$ backlog can only decrease (with strict inequality if $p_{h,\,i} > 1)$.

    As for bullet 3, if we denote $\phi(g) = \ln(M(g))$, then clearly $\phi(g)$ is convex, implying in turn that $F(g):= M(g) - 1$ is also convex.
    Now fix \(0<g_1<g_2\). Since
$
g_1
=
\frac{g_1}{g_2}g_2
+
\left(1-\frac{g_1}{g_2}\right)0,
$
convexity of \(F\) gives
\[
F(g_1)
\leq
\frac{g_1}{g_2}F(g_2)
+
\left(1-\frac{g_1}{g_2}\right)F(0).
\]
Using \(F(0) = M(0)-1 \leq M(T_\ell)- 1 \leq 0 \), we obtain
$
F(g_1)
\leq
\frac{g_1}{g_2}F(g_2),
$ which implies
\[
\frac{M(g_1)-1}{g_1}
= \frac{F(g_1)}{g_1}
\leq
\frac{F(g_2)}{g_2}
=
\frac{M(g_2)-1}{g_2},
\]
as desired.
    
\end{proof}

\begin{proposition}
\label{prop:g_j>T_r, g_j + g_{j+1} - T_r leq T_q}
    Let $G$ be a strategy such that $g_j > T_\ell$ and $g_j + g_{j+1} - T_\ell \leq T_h$, for some round $j$. Consider the strategy $G'$ obtained from $G$ by setting $g'_j = T_\ell$, $g'_{j+1} = g_j +g_{j+1} - T_\ell$.  Then $\mathrm{Rev}(G') \leq \mathrm{Rev}(G)$. 
\end{proposition}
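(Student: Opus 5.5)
The plan is to mirror the proof of Proposition~\ref{prop:g_j + g_j+1 - T leq L}. First invoke Lemma~\ref{lem:rev G - rev G' cancel out generalized} so that only rounds $j$ and $j+1$ contribute to the revenue difference. Then compare the two rounds directly, using the properties of $M$ from Lemma~\ref{lem:M_properties}. Note that $(G,G')$ is a $j$-shifted pair, since $g'_j = T_\ell < g_j$ and the total over the two rounds is preserved.

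\emph{Step 1 (conditions of Lemma~\ref{lem:rev G - rev G' cancel out generalized}).} I expect this to be the only point needing care.
\begin{itemize}
\item For $d=\ell$, Condition 1 holds since $g_j > T_\ell$, and Condition 2 holds since $g'_j = T_\ell$.
\item For $d=h$, Condition 2 holds because $g'_{j+1} = g_j+g_{j+1}-T_\ell \le T_h$ is exactly the hypothesis.
\item For $d=h$, Condition 1 follows from $g_j > T_\ell$ and $g_j + g_{j+1} \le T_\ell + T_h$, which give $g_{j+1} < T_h$.
\end{itemize}
(The standing assumption $L\le T_\ell+T_h$ is not even needed here.) Hence $p_{j+2}=p'_{j+2}$, and
\[
\mathrm{Rev}(G)-\mathrm{Rev}(G') = p_j g_j + p_{j+1}g_{j+1} - p_j T_\ell - p'_{j+1}\,(g_j+g_{j+1}-T_\ell).
\]

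\emph{Step 2 (express the prices through $M$).} Since $G$ and $G'$ agree before round $j$, the pricing state at round $j$ is the same for both. Therefore $p_{j+1} = p_j M(g_j)$ and $p'_{j+1} = p_j M(T_\ell)$. Write $m_0 := M(T_\ell)$. By Lemma~\ref{lem:M_properties}, bullet 2, we have $m_0\le 1$, and by bullet 1, $M(g_j)\ge m_0$ because $g_j > T_\ell$.

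\emph{Step 3 (regroup).} Substituting, the difference equals
\[
p_j\Bigl[(g_j - T_\ell)(1-m_0) + g_{j+1}\bigl(M(g_j)-m_0\bigr)\Bigr].
\]
Both summands are nonnegative: the first because $g_j>T_\ell$ and $m_0\le 1$, the second because $g_{j+1}\ge 0$ and $M(g_j)\ge m_0$. Hence $\mathrm{Rev}(G')\le \mathrm{Rev}(G)$. Bullet 3 of Lemma~\ref{lem:M_properties} is not needed for this case; I expect it to be used in the complementary case $g_j+g_{j+1}-T_\ell > T_h$.
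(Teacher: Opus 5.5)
Your proof is correct and is essentially the paper's argument. You localize the revenue difference to rounds $j,j+1$ via Lemma~\ref{lem:rev G - rev G' cancel out generalized}, divide by $p_j$, and regroup into $(g_j-T_\ell)(1-M(T_\ell)) + g_{j+1}\bigl(M(g_j)-M(T_\ell)\bigr)$, whose summands are nonnegative by bullets 1 and 2 of Lemma~\ref{lem:M_properties}; your explicit check of the lemma's conditions (notably $g_{j+1}<T_h$ for $d=h$) just fills in a step the paper asserts without detail.
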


\begin{proof}
    We observe that the assumptions of the proposition imply the conditions of Lemma~\ref{lem:rev G - rev G' cancel out generalized}.  
    We therefore have
    \begin{align*}
        \mathrm{Rev}(G) - \mathrm{Rev}(G') &= 
    p_j g_j + p_{j+1} g_{j+1} - \left(
    p_j T_\ell + p'_{j+1} (g_j + g_{j+1} - T_\ell)
    \right) \\
    \end{align*}
    After dividing by $p_j$, the right-hand side becomes
    \begin{align*}
        g_j + M(g_j) g_{j+1} - T_\ell - M(T_\ell)(g_j + g_{j+1} -T_\ell) & = (g_j - T_\ell) (1 - M(T_\ell)) + g_{j+1}(M(g_j) - M(T_\ell))
    \end{align*}
    Both summands in the right-hand side are non-negative by Lemma~\ref{lem:M_properties}, and we are done.

\end{proof}

The following proposition corresponds to the first bullet of Theorem~\ref{thm:sum_exceeds_limit}.  We separate it from the proof of the theorem for readability.

\begin{proposition}
\label{prop:structure_2_targets}
If $L \leq T_\ell + T_h$, then there exists an optimal strategy $G$ which is monotone non-decreasing.
\end{proposition}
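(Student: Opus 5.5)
We follow the proof of bullet~1 of Theorem~\ref{thm:structure}. Among all optimal strategies pick one, $G$, that is minimal with respect to $\succ$; such a minimum exists because the set of strategies is finite. Suppose $G$ is not monotone, so $g_j>g_{j+1}$ for some $j$. It then suffices to exhibit $G'$ such that $(G,G')$ is a $j$-shifted pair and $\mathrm{Rev}(G')\le\mathrm{Rev}(G)$. Such a $G'$ is optimal and satisfies $G'\succ G$, which contradicts the choice of $G$. Throughout we work with the ratio $M$ and Lemma~\ref{lem:M_properties}, and we use Lemma~\ref{lem:rev G - rev G' cancel out generalized} to reduce every revenue comparison to rounds $j,j+1$. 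Write $x=g_j>y=g_{j+1}$.

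\emph{Case $x\le T_\ell$.} Move one unit from round $j$ to round $j+1$. Since $y+1\le x\le T_\ell<T_h$, both conditions of Lemma~\ref{lem:rev G - rev G' cancel out generalized} hold for $d=\ell$ and for $d=h$. As in Proposition~\ref{prop:g_j_leq_T_g_j+1+leq_T-1}, the revenue difference is $(p_j-p'_{j+1})+y(p_{j+1}-p'_{j+1})$. This is nonnegative because $p'_{j+1}\le p_{j+1}=p_jM(x)\le p_jM(T_\ell)\le p_j$, using bullets~1 and~2 of Lemma~\ref{lem:M_properties}.

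\emph{Case $x>T_\ell$ and $x+y-T_\ell\le T_h$.} Proposition~\ref{prop:g_j>T_r, g_j + g_{j+1} - T_r leq T_q} applies directly.

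\emph{Case $x>T_\ell$ and $x+y-T_\ell>T_h$.} Split on the position of $y$.

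If $y\ge T_h$, then $x>T_h$, so in round $j$ neither backlog is truncated and $M(g)=\exp(\lambda g-\theta)$ on the relevant range. Take $g'_{j+1}=\min\{L,\,x+y-T_h\}$, with $g'_j$ the remainder. The conditions of Lemma~\ref{lem:rev G - rev G' cancel out generalized} hold because all involved usages are at least $T_h$, except possibly $g'_j$, and that is handled by $g'_{j+1}\le T_h$ or $g'_j\ge T_h$. The concavity argument of Proposition~\ref{prop:g_j + g_j+1 - T > L} (or the argument of Proposition~\ref{prop:g_j + g_j+1 - T leq L}), with exponent $\lambda$ in place of $1/(AT)$, then gives $\mathrm{Rev}(G')\le\mathrm{Rev}(G)$.

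If $y<T_h$, take $g'_{j+1}=T_h$ and $g'_j=x':=x+y-T_h$. Then $x'>T_\ell$ and $x'<x$, so the lemma's conditions hold. After dividing by $p_j$ the revenue difference is
\[
(T_h-y)+yM(x)-T_hM(x').
\]
It is nonnegative provided $T_h(M(x')-1)\le y(M(x)-1)$. Bullet~3 of Lemma~\ref{lem:M_properties} gives $M(x')-1\le \frac{x'}{x}(M(x)-1)$, so it suffices that $T_hx'\le xy$, i.e.\ $(x-T_h)(y-T_h)\ge 0$. This holds when $x\le T_h$.

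The main obstacle is the remaining subcase $T_h<x$, $y<T_h$, $x+y>T_\ell+T_h$. This is exactly where $L\le T_\ell+T_h$ must be used: it gives $x-T_h\le L-T_h\le T_\ell$, so round $j$ adds at most $T_\ell$ to the $h$-backlog. It also forces $y>T_\ell+T_h-x\ge T_\ell+T_h-L\ge 0$, so $y>T_\ell$ is bounded away from $0$.

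The first thing to try is the plain swap $g'_j=y$, $g'_{j+1}=x$. Both orderings keep both backlogs untruncated for $d=\ell$ (as $x,y>T_\ell$). For $d=h$, the sum $h_j+x+y-2T_h$ is unchanged, and the bound $x-T_h\le T_\ell<y$ should rule out truncation of the $h$-backlog after the swap. Once $p_{j+2}=p'_{j+2}$ is established, the comparison reduces to
\[
x+M(x)y\ \ge\ y+M(y)x,
\quad\text{i.e.}\quad
\frac{M(x)-1}{x}\ge\frac{M(y)-1}{y},
\]
which is bullet~3 of Lemma~\ref{lem:M_properties} again. If the swap fails the equal-$p_{j+2}$ check, the fallback is to interpolate: shift only enough gas that one of the two rounds lands exactly on $T_h$ or $T_\ell$, then invoke the earlier cases.
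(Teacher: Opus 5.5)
\textbf{The gap is in your last subcase}, $T_h<x\le L$, $y<T_h$, $x+y>T_\ell+T_h$. This is the only place where $L\le T_\ell+T_h$ matters: the non-monotone optimum $(10,0,8,10)$ in the example of Section~\ref{sec:two_target_mechanism}, where $L>T_\ell+T_h$, has exactly this shape at $j=1$.

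The plain swap does not work there, for two reasons.
\begin{itemize}
\item From $y>T_\ell+T_h-x$ you only get $y>0$, not $y>T_\ell$; take $T_\ell=10$, $T_h=60$, $x=L=70$, $y=5$. So condition~2 of Lemma~\ref{lem:rev G - rev G' cancel out generalized} can fail already for $d=\ell$.
\item For $d=h$ that condition always fails, since $g'_j=y<T_h<x=g'_{j+1}$. Concretely, if $h_j=0$ the swap gives $h'_{j+2}=x-T_h>[x+y-2T_h]_+=h_{j+2}$. Hence $p'_{j+2}>p_{j+2}$ and all later prices are weakly higher, so the two-round comparison via bullet~3 proves nothing. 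Whether the $h$-backlog is truncated in the swapped round depends on $h_j$, not on the bound $x-T_h\le T_\ell$.
\end{itemize}

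The fallback ``interpolate and invoke the earlier cases'' is not an argument. The earlier cases never show that a partial shift itself is revenue non-increasing, and the two natural interpolations behave differently.
\begin{itemize}
\item Setting $g'_j=T_h$ can strictly increase revenue. After dividing by $p_j$, the difference is $yM(T_h)\bigl(e^{\lambda(x-T_h)}-1\bigr)-(x-T_h)\bigl(M(T_h)-1\bigr)$, with $M(T_h)=e^{(T_h-T_\ell)/(A_\ell T_\ell)}>1$. This is negative for small $y$.
\item Setting $g'_j=x':=x+y-T_h$, $g'_{j+1}=T_h$ is the right move, but your bullet-3 reduction would need $(x-T_h)(y-T_h)\ge0$, which is false here.
\end{itemize}

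\textbf{The missing idea} is the paper's treatment of this last transformation. Since $x>T_h$ you have $M(x)>1$, and you may assume $M(x')>1$, otherwise the claim is trivial. The key step is the \emph{logarithmic} inequality $T_h\ln M(x')\le y\ln M(x)$, proved summand by summand from
\[
\ln M(x)=\frac{x-T_\ell}{A_\ell T_\ell}+\frac{x-T_h}{A_hT_h},\qquad
\ln M(x')=\frac{x'-T_\ell}{A_\ell T_\ell}+\max\left\{\frac{x'-T_h}{A_hT_h},\,-\frac{h_j}{A_hT_h}\right\}.
\]
The $\ell$-part reduces to $(T_h-y)(T_\ell+T_h-x)\ge0$. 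The $h$-part, after discarding the non-positive branch of the max, reduces to $(T_h-y)(2T_h-x)\ge0$. Both follow from $x\le L\le T_\ell+T_h$; this is where the hypothesis is genuinely used. You then pass to $T_h(M(x')-1)\le y(M(x)-1)$, using that $z\mapsto(e^z-1)/z$ is increasing and $\ln M(x')\le\ln M(x)$.

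\textbf{Two smaller points.}
\begin{itemize}
\item In the subcase $y\ge T_h$, the single-target arguments do not transfer by just replacing the exponent with $\lambda$. They rely on $M(T)=1$, whereas here $M(T_h)>1$; for instance $p'_{j+1}\ne p_j$ in the analogue of Proposition~\ref{prop:g_j + g_j+1 - T leq L}.
\item In the subcase $x\le T_h$, your ``it suffices'' step silently assumes $M(x)\ge1$, which can fail when $h_j>0$. The inequality then follows instead from $M(x')\le M(x)$ and $T_h>y$.
\end{itemize}
Both of these subcases ($T_h\le y<x$, resp.\ $T_\ell<y<x\le T_h$) are same-region pairs. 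There the plain swap, together with bullet~3 of Lemma~\ref{lem:M_properties}, works directly, exactly as in the paper's first case.
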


\begin{proof}
    Let $G$ be an optimal strategy, and suppose there is a round $j$ for which $g_{j+1} < g_j$.  We will show the existence of a dominating strategy $G'$ whose obtained revenue is at most that of $G$ (see definition of the domination relation in Section~\ref{sec:single_target}).
    We split to cases.

    Assume for the first case that $g_j$ and $g_{j+1}$ (weakly) lie in the same regions, i.e. 
    % lie on the same side of $T_h$, for both $h \in\{r,q\}$.
    for both $d \in \{\ell,h\}$ we have either $g_j,g_{j+1} \leq T_d$ or $g_j,g_{j+1}\geq T_d$.
    In this case let $G'$ be the strategy obtained from $G$ by swapping $g_j$ with $g_{j+1}$, i.e.
    \[
    g'_j = g_{j+1}\ ,\ g'_{j+1} = g_j 
    \]
    We observe that the assumptions of Lemma~\ref{lem:rev G - rev G' cancel out generalized} hold, and therefore we have
    \begin{align*}
        \mathrm{Rev}(G) - \mathrm{Rev}(G') &= 
    p_j g_j + p_{j+1} g_{j+1} - \left(p_j g'_j + p'_{j+1} g'_{j+1}\right) \\
    &= p_j g_j +p_{j+1} g_{j+1} -(p_j g_{j+1} + p'_{j+1} g_j)
    \end{align*}
    Notice that if $g_{j+1} = 0$ then we end up with the expression $g_j(p_j - p'_{j+1})$ which is non-negative since in this case we clearly have $p'_{j+1} \leq p_j$.  We can thus assume that $g_j > g_{j+1}> 0$, which allows us to divide the resulting expressions by $p_j\cdot g_j \cdot g_{j+1}$.  The right-hand side becomes
    \begin{align*}
        \frac{1}{g_{j+1}} + \frac{p_{j+1}}{p_j g_j} - \frac{1}{g_j} - \frac{p'_{j+1}}{p_j g_{j+1}} &=
        \frac{1}{g_{j+1}} + \frac{M(g_j)}{g_j} - \frac{1}{g_j} - \frac{M(g_{j+1})}{g_{j+1}} \\
        &= \frac{M(g_j) - 1}{g_j} - \frac{M(g_{j+1}) - 1}{g_{j+1}}
    \end{align*}
    and this is non-negative by bullet 3 of Lemma~\ref{lem:M_properties}.  This concludes the first case.

    For the second case, assume that $g_{j+1} < T_\ell < g_j$ and $g_{j} + g_{j+1} - T_\ell \leq T_h$.
    Note that the assumptions of Proposition~\ref{prop:g_j>T_r, g_j + g_{j+1} - T_r leq T_q} hold in this case, and so there exists a strategy $G'$ as desired.
    % In this case, define
    % \[
    % g'_j = T_r \ ,\ g'_{j+1} = g_j + g_{j+1} - T_r
    % \]
    % and we again observe that the condition of Lemma~\ref{lem:rev G - rev G' cancel out generalized} holds.  We therefore have
    % \begin{align*}
    %     \mathrm{Rev}(G) - \mathrm{Rev}(G') &= 
    % p_j g_j + p_{j+1} g_{j+1} - \left(
    % p_j T_r + p'_{j+1} (g_j + g_{j+1} - T_r)
    % \right) \\
    % \end{align*}
    % After dividing by $p_j$, the right-hand side becomes
    % \begin{align*}
    %     g_j + M(g_j) g_{j+1} - T_r - M(T_r)(g_j + g_{j+1} -T_r) & = (g_j - T_r) (1 - M(T_r)) + g_{j+1}(M(g_j) - M(T_r))
    % \end{align*}
    % Both summands in the right-hand side are non-negative by Lemma~\ref{lem:M_properties}, and we are done.
    % \bbc{Note:  the argument above holds also for the case $T_r < g_j < g_{j+1}$ and $g_{j} + g_{j+1} - T_r \leq T_q$}

    Now suppose both previous cases do not hold.  If $g_{j+1} < T_\ell$ then we must have $g_j > T_\ell$ as otherwise we are back in the first case.  
    This further implies that
    if $g_{j+1} < T_\ell$ then $g_j + g_{j+1} - T_\ell > T_h$ as otherwise (together with $g_j > T_\ell$) we are in the second case, and in particular $g_j > T_h$.  
    
    If, on the other hand, we have $T_\ell \leq g_{j+1}$, then we must also have $g_j > T_h$ (as otherwise we have $T_\ell \leq g_{j+1} < g_j \leq T_h$ and we are back in the first case) which in particular also implies $g_j + g_{j+1} - T_\ell > T_h$.  
    
    To conclude, when both previous cases do not hold, we must have both $g_j > T_h$ and $g_j + g_{j+1} -T_\ell > T_h$.
    We define 
    \[
    g'_j = g_j + g_{j+1} -T_h\ ,\ g'_{j+1} =T_h
    \]
    Note that $g'_j > T_\ell$ and that the condition of Lemma~\ref{lem:rev G - rev G' cancel out generalized} holds.  We therefore have
    
    \begin{align}
        \frac{\mathrm{Rev}(G) - \mathrm{Rev}(G')}{p_j} &= \frac{1}{p_j}\left[ 
    p_j g_j + p_{j+1} g_{j+1} - \left(
    p_j g'_j + p'_{j+1} g'_{j+1}
    \right)
    \right] \nonumber \\
    &=g_j +M(g_j)g_{j+1} - (g_j+g_{j+1} - T_h) - M(g'_j)T_h \nonumber \\
    &= g_{j+1} (M(g_j) - 1) - T_h (M(g'_j) - 1) \label{eq:case_3_inequality}
    \end{align}
We need to show that the above is non-negative.
First, note that $g_j > T_h$ implies that $M(g_j) > 1$.  In particular, we can assume that $M(g'_j) > 1$, as otherwise the claim is trivial.  Therefore, we assume equivalently that $0 < \ln (M(g'_j)) < \ln(M(g_j))$.
We claim that it suffices to show that

\begin{equation}
\label{eq:case_3_main_ineq}
T_h \ln (M(g'_j)) \leq g_{j+1} \ln (M(g_j)).
\end{equation}
To see that this is indeed enough, note that \eqref{eq:case_3_main_ineq} implies

\begin{align*}
    T_h (M(g'_j) - 1) &= T_h \ln (M(g'_j)) \frac{M(g'_j) - 1}{\ln (M(g'_j))}  \leq g_{j+1} \ln (M(g_j)) \frac{M(g_j) - 1}{\ln (M(g_j))} = g_{j+1}(M(g_j) - 1)   
\end{align*}
where the inequality holds by equation \eqref{eq:case_3_main_ineq}, the fact that the function $z \mapsto \frac{e^z - 1}{z}$ is increasing, and the fact that $g'_j \leq g_j$.
%and $\ln (M(g_j)) > 0$ (since $g_j > T_q$).

We now prove that inequality \eqref{eq:case_3_main_ineq} holds.  To this end, 
recall that $l_j, h_j$ are the gas backlog values at round $j$, corresponding to the low and high targets respectively.
We first observe that
% \begin{align*}
% \ln M(g_j) &= \max\left\{R_j + \frac{g_j - T_r}{A_r T_r}, 0\right\} - R_j + \max\left\{Q_j + \frac{g_j - T_q}{A_q T_q} , 0\right\} - Q_j \\
% &= \frac{g_j - T_r}{A_r T_r} + 
% \frac{g_j - T_q}{A_q T_q}\\
% \ln M(g'_j) &= \max\left\{R_j + \frac{g'_j - T_r}{A_r T_r}, 0\right\} - R_j + \max\left\{Q_j + \frac{g'_j - T_q}{A_q T_q} , 0\right\} - Q_j \\
% &= \frac{g'_j - T_r}{A_r T_r} + 
% \max\left\{ \frac{g'_j - T_q}{A_q T_q} , -Q_j\right\} 
% \end{align*}

\begin{align*}
\ln M(g_j) &= \max\left\{\frac{\ell_j + g_j - T_\ell}{A_\ell T_\ell}, 0\right\} - \frac{l_j}{A_\ell T_\ell} + \max\left\{\frac{h_j + g_j - T_h}{A_h T_h} , 0\right\} - \frac{h_j}{A_h T_h } \\
&= \frac{g_j - T_\ell}{A_\ell T_\ell} + 
\frac{g_j - T_h}{A_h T_h}\\
\ln M(g'_j) &= \max\left\{\frac{\ell_j + g'_j - T_\ell}{A_\ell T_\ell}, 0\right\} - \frac{l_j}{A_\ell T_\ell} + \max\left\{\frac{h_j + g'_j - T_h}{A_h T_h} , 0\right\} - \frac{h_j}{A_h T_h } \\
&= \frac{g'_j - T_\ell}{A_\ell T_\ell} + 
\max\left\{ \frac{g'_j - T_h}{A_h T_h} , -\frac{h_j}{A_h T_h }\right\} 
\end{align*}
We establish inequality \eqref{eq:case_3_main_ineq} separately for the $\ell$ and the $h$ summands.
\paragraph{$\ell$-summand} We need to show that 
\[
\frac{T_h(g'_j - T_\ell)}{A_\ell T_\ell} \leq \frac{g_{j+1}(g_j - T_\ell)}{A_\ell T_\ell}.
\]  
After removing the common denominator, the left-hand side becomes
\begin{align*}
    T_h(g'_j - T_\ell) &= T_h(g_j + g_{j+1} - T_h - T_\ell) \\
    &= T_h(g_j - T_\ell) - T_h(T_h - g_{j+1})
\end{align*}
Subtracting this from the right-hand side, we get
\begin{align*}
    g_{j+1}(g_j - T_\ell) - T_h(g_j - T_\ell) + T_h(T_h - g_{j+1} ) &= 
    (T_h - g_{j+1})(T_\ell - g_j) + T_h(T_h - g_{j+1} ) \\
    &= (T_h - g_{j+1})(T_\ell + T_h - g_j)
\end{align*}
and this is indeed non-negative as desired, since $g_{j+1} < T_h$ and $g_j \leq L \leq T_\ell + T_h$. 
%\bbc{This is the first place in the proof that we use $L \leq T_r + T_q$}.

\paragraph{$h$-summand}
We need to show that
\[
T_h\cdot \max\left\{ \frac{g'_j - T_h}{A_h T_h} , -\frac{h_j}{A_h T_h }\right\}  \leq g_{j+1}\cdot \frac{g_j - T_h}{A_h T_h}.
\]
Since the right-hand side is non-negative and $-h_j$ is non-positive, it is enough to show that
\[
T_h (g'_j - T_h) \leq g_{j+1}(g_j - T_h).
\]
% It is enough to show that $T_q (g'_j - T_q) \leq g_{j+1}(g_j - T_q)$ and that 
% %$-T_q Q_j \leq - g_{j+1} Q_j$.
% $-T_q Q_j \leq  g_{j+1}(g_j - T_q)$.
% The latter holds trivially since $T_q < g_j$, so we prove the former.  
%
Subtracting the left-hand side from the right-hand side, we get

\begin{align*}
    g_{j+1}(g_j - T_h) - T_h (g'_j - T_h) &= g_{j+1}(g_j - T_h) - T_h (g_j + g_{j+1} - T_h - T_h) \\
    &= g_{j+1}g_j - g_{j+1}T_h +2T_h^2 - T_h g_j - T_h g_{j+1} \\
    &= g_{j+1}g_j - 2g_{j+1}T_h +2T_h^2 - T_h g_j \\
    &=(T_h - g_{j+1})(2T_h - g_j)
\end{align*}
and this is non-negative as desired since $g_{j+1} \leq T_h$ and $g_j \leq L \leq T_\ell + T_h < 2T_h$. 
%\bbc{This is the second place we used that assumption}

\end{proof}

We are now ready to prove Theorem~\ref{thm:sum_exceeds_limit}.  For clarity we restate it here.

\sumExceedsLimit*

% \begin{theorem}
% There is an optimal strategy $G$ satisfying:
% \begin{enumerate}
%     \item $G$ is monotone non-decreasing.
%     \item There is at most one round $j$ for which $0 < g_j < T_r$. 
%     \item There is at most one round $i$ for which $T_r < g_i < T_q$.
%     \item There is at most one round $k$ for which $T_q < g_k < L$.
% \end{enumerate}   
% \end{theorem}

\begin{proof}
    Similar to the proof of Theorem~\ref{thm:structure}, it is enough to show that if any of the bullets does not hold but the preceding bullets do hold for any strategy $G$, then there exists a strategy $G'$ that dominates $G$ and whose obtained revenue is at most that which is obtained by $G$.

    We already proved this for bullet 1 (see Proposition~\ref{prop:structure_2_targets}).  Now let $G$ satisfy bullet 1 but not 2, i.e. $G$ is a monotone non-decreasing strategy and there is a round $j$ such that $0 < g_j \leq g_{j+1} < T_\ell$.  
    Consider the strategy $G'$ defined as $G$ only that $g'_j = g_j - 1$, $g'_{j+1} = g_{j+1} +1$,
    and we note that the proof of Proposition~\ref{prop:g_j_leq_T_g_j+1+leq_T-1} carries over verbatim to our setting.

    Suppose now that $G$ satisfies bullets 1,2 but not 3.  Then there is some round $j$ such that $T_\ell < g_j \leq g_{j+1} < T_h$.  If $g_j+g_{j+1} - T_\ell \leq T_h$ then we finish by applying Proposition~\ref{prop:g_j>T_r, g_j + g_{j+1} - T_r leq T_q}.  
    
    Otherwise, let $G'$ be the strategy obtained from $G$ by setting $g'_j = g_j + g_{j+1} - T_h$ and $g'_{j+1} = T_h$.  We observe that the conditions of Lemma~\ref{lem:rev G - rev G' cancel out generalized} hold, and so we have
    \begin{align*}
        \frac{\mathrm{Rev}(G) - \mathrm{Rev}(G')}{p_j} &=
        g_j + M(g_j) g_{j+1} - (g'_j + M(g'_j)g'_{j+1}) \\
        &= g_j + M(g_j) g_{j+1} - ( g_j + g_{j+1} -T_h + M(g'_j)T_h) \\
        &= g_{j+1} (M(g_j) - 1) - T_h (M(g'_j) - 1)
    \end{align*}
    and this last expression is non-negative since 
    \begin{align*}
        T_h (M(g'_j) - 1) &= T_h \cdot g'_j\frac{(M(g'_j) - 1)}{g'_j} \\
        &\leq  g_{j+1}\cdot g_j \frac{(M(g'_j) - 1)}{g'_j} \\
        &\leq g_{j+1}\cdot g_j \frac{(M(g_j) - 1)}{g_j} \\
        &= g_{j+1}(M(g_j) - 1),
    \end{align*}
    where the first inequality holds by basic calculus arguments since the parabolic function $f(x) := x(g_j + g_{j+1} - x)$ satisfies
    \[
    T_h \cdot g'_j = f(g'_j) < f(g_j) = g_j \cdot g_{j+1},
    \]
    and the second inequality holds by bullet 3 of Lemma~\ref{lem:M_properties} (since $g'_j < g_j$).

    Suppose now that $G$ satisfies bullets 1,2,3 but not 4, implying that there is a round $j$ such that $T_h < g_j \leq g_{j+1} < L$.  If $g_j + g_{j+1} - T_h > L$, then we can proceed very similarly to the argument in the previous paragraph.  Explicitly, we set $g'_j = g_j + g_{j+1} - L$ and $g'_{j+1} = L$. We observe that the conditions of Lemma~\ref{lem:rev G - rev G' cancel out generalized} hold, which implies that
    \begin{align*}
        \frac{\mathrm{Rev}(G) - \mathrm{Rev}(G')}{p_j} &=
        g_j + M(g_j) g_{j+1} - (g'_j + M(g'_j)g'_{j+1}) \\
        &= g_j + M(g_j) g_{j+1} - ( g_j + g_{j+1} -L + M(g'_j)L) \\
        &= g_{j+1} (M(g_j) - 1) - L (M(g'_j) - 1)
    \end{align*}
    The last expression is non-negative since
    \begin{align*}
        L (M(g'_j) - 1) &= L \cdot g'_j\frac{(M(g'_j) - 1)}{g'_j} \\
        &\leq  g_{j+1}\cdot g_j \frac{(M(g'_j) - 1)}{g'_j} \\
        &\leq g_{j+1}\cdot g_j \frac{(M(g_j) - 1)}{g_j} \\
        &= g_{j+1}(M(g_j) - 1)
    \end{align*}
    and we are done.

    We are left with the case $g_j + g_{j+1} - T_h \leq L$.  We define $g'_j = T_h$ and $g'_{j+1} = g_j + g_{j+1} - T_h$ and observe that the conditions of Lemma~\ref{lem:rev G - rev G' cancel out generalized} hold.  From here we have
    \begin{align}
    \label{eq:rev_diff_last_case}
        \frac{\mathrm{Rev}(G) - \mathrm{Rev}(G')}{p_j} &=
        g_j + M(g_j) g_{j+1} - (g'_j + M(g'_j)g'_{j+1}) \nonumber \\
        &= g_j + M(g_j) g_{j+1} - (T_h + M(T_h)(g_j + g_{j+1} - T_h)) \nonumber \\
        &= g_{j+1}(M(g_j) - M(T_h)) - (g_j - T_h)(M(T_h) - 1) \nonumber \\
        &\geq T_h M(T_h)\left(\frac{M(g_j)}{M(T_h)} - 1 \right) - (g_j - T_h)(M(T_h) - 1)
    \end{align}
    where the inequality holds since $g_{j+1} \geq T_h$ (and $M(g_j) - M(T_h) \geq 0$ by Lemma~\ref{lem:M_properties}).

    It remains to show that the right-hand side of~\eqref{eq:rev_diff_last_case} is non-negative.
    To that end we establish a relation between $M(g_j)$ and $M(T_h)$.  When $T_h$ gas is used, the $\ell$ component of the price increases whereas the $h$ component stays the same.  We therefore have
    \begin{equation}
    \label{eq:M_T_q}
    M(T_h) = \frac{p'_{j+1}}{p_j} = \exp\left(\frac{T_h -T_\ell}{A_\ell T_\ell}\right)
    \end{equation}
    Furthermore, since $g_j >T_h$, then when $g_j$ gas is used both components increase.  Combining with \eqref{eq:M_T_q}, we get
    \begin{align}
    \label{eq:M_g_j_over_M_T_q}
        M(g_j) &= \exp\left(\frac{g_j - T_\ell}{A_\ell T_\ell} + \frac{g_j - T_h}{A_h T_h}\right) \nonumber \\
        &= \exp\left(\frac{T_h-T_\ell}{A_\ell T_\ell}\right)
        \exp\left((g_j - T_h)\left(\frac{1}{A_\ell T_\ell} + \frac{1}{A_h T_h}\right)\right) \nonumber \\
        &=M(T_h) \exp\left((g_j - T_h)\left(\frac{1}{A_\ell T_\ell} + \frac{1}{A_h T_h}\right)\right) 
    \end{align}
    At this point we introduce notation.  Let $\Delta := g_j - T_h$ and $\lambda := \frac{1}{A_\ell T_\ell} + \frac{1}{A_h T_h} $, and so \eqref{eq:M_g_j_over_M_T_q} becomes
    \[
    \frac{M(g_j)}{M(T_h)} = \exp(\Delta \lambda)
    \]
    Substituting this into \eqref{eq:rev_diff_last_case}, we get
    \begin{align*}
        \frac{\mathrm{Rev}(G) - \mathrm{Rev}(G')}{p_j} &= 
        T_h M(T_h)\left(\exp\left(\Delta \lambda\right)  - 1 \right) -\Delta(M(T_h) - 1) \\
        &\geq T_h M(T_h)\cdot \Delta \lambda -\Delta(M(T_h) - 1) \\
        &= \Delta (T_h M(T_h)\lambda -M(T_h) + 1)
    \end{align*}
    where the inequality holds by the identity $\exp(x) - 1 \geq  x$.  It is therefore left to prove that $T_h M(T_h)\lambda -M(T_h) + 1 \geq 0 $, or equivalently 
    \[
    T_h \lambda \geq 1 - 1/M(T_h)
    \]
    To this end we have
    \begin{align*}
    1 - 1/M(T_h)    &= 1 - \exp\left(-\frac{T_h -T_\ell}{A_\ell T_\ell}\right) \\
                    &\leq \frac{T_h -T_\ell}{A_\ell T_\ell} \\
                    &\leq T_h \left(\frac{1}{A_\ell T_\ell} + \frac{1}{A_h T_h}\right) \\
                    &= T_h \lambda
    \end{align*}
    where the first inequality again uses the identity $1 -\exp(x) \leq -x $.  This concludes the proof.
    
\end{proof}

\end{document}

%% file: settings/packages.tex
\usepackage{amsmath,amsthm,graphicx,url}
\usepackage{amssymb}
\usepackage[margin=1in]{geometry}
\usepackage{bm}
\usepackage{mathtools}
\usepackage[normalem]{ulem}
\usepackage{thmtools}

\usepackage{tikz}
\usetikzlibrary{arrows.meta,positioning}

\usepackage{footmisc}

\usepackage{cancel}

\usepackage{soul}

\usepackage{booktabs} % For formal tables

\usepackage[ruled, noline]{algorithm2e} % For algorithms

\SetKwBlock{Initialize}{Initialize:}{}
\SetKwFor{From}{from}{do}{}
\SetKwFor{ForAllInParallel}{for all}{in parallel do}{}

\SetCommentSty{mycommfont}

\usepackage{xurl}
\usepackage[colorlinks=true, urlcolor=blue, linkcolor=red]{hyperref}
\usepackage{cleveref}

\usepackage{thmtools} 
\usepackage{thm-restate}

\usepackage[numbers]{natbib}

\SetAlFnt{\small}
\SetAlCapFnt{\small}
\SetAlCapNameFnt{\small}
\SetAlCapHSkip{0pt}
\IncMargin{-\parindent}

\makeatletter
\g@addto@macro\@floatboxreset\centering
\makeatother

%% file: settings/macros.tex
\usepackage[]{color-edits}% suppress
\addauthor{EF}{blue}
\addauthor{RF}{red}
\addauthor{BB}{cyan}

%% file: strategy_transformation_figure.tex
\begin{figure}[htb]
\centering

\begin{tikzpicture}[
  x=0.65cm,
  y=0.36cm,
  every node/.style={font=\small},
]

\def\Tq{6}
\def\L{10}
\def\barhalf{0.42}   % Bar width = 0.88; gap = 0.12
\def\panelshift{9}  % Horizontal offset of the second panel

% ------------------------------------------------------------
% Original strategy G: left panel
% ------------------------------------------------------------

\begin{scope}
  \node at (4.5,12)
    {Original strategy $G$};

  \foreach \i/\h in {
    1/4,
    2/8,
    3/5,
    4/3,
    5/8,
    6/8,
    7/10,
    8/4
  }{
    \filldraw[fill=gray!20,draw=black]
      (\i-\barhalf,0)
      rectangle
      (\i+\barhalf,\h);

    \node[font=\scriptsize,anchor=south]
      at (\i,\h+0.12) {$\h$};
  }
\end{scope}

% ------------------------------------------------------------
% Transformed strategy: right panel
% ------------------------------------------------------------

\begin{scope}[shift={(\panelshift,0)}]
  \node at (4.5,12)
    {Transformed strategy $\overline{G}$};

  \foreach \i/\h in {
    1/4,
    2/6,
    3/6,
    4/4,
    5/6,
    6/6,
    7/8,
    8/10
  }{
    \filldraw[fill=gray!20,draw=black]
      (\i-\barhalf,0)
      rectangle
      (\i+\barhalf,\h);

    \node[font=\scriptsize,anchor=south]
      at (\i,\h+0.12) {$\h$};
  }
\end{scope}

% ------------------------------------------------------------
% Baselines and reference lines for both panels
% ------------------------------------------------------------

\foreach \s in {0,\panelshift}{
  \begin{scope}[shift={(\s,0)}]
    % Horizontal line at zero
    \draw[black]
      (0.45,0) -- (8.55,0);

    % High target
    \draw[dashed,black]
      (0.45,\Tq) -- (8.55,\Tq);

    % Gas limit
    \draw[densely dotted,black]
      (0.45,\L) -- (8.55,\L);

    % Reference labels on the left panel only
    \ifnum\s=0\relax
      \node[anchor=east,font=\scriptsize]
        at (0.35,0) {$0$};
    
      \node[anchor=east,font=\scriptsize]
        at (0.35,\Tq) {$T_q=6$};

      \node[anchor=east,font=\scriptsize]
        at (0.35,\L) {$L=10$};
    \fi
  \end{scope}
}

\end{tikzpicture}

\caption{Illustration of the strategy transformation}
\label{fig:strategy_transform}
\end{figure}